\let\epsilon=\varepsilon
\newtheorem{theorem}{Theorem}[section]
\newtheorem{metatheorem}[theorem]{Metatheorem}
\newtheorem{corollary}[theorem]{Corollary}
\theoremstyle{definition}
\newtheorem{problem}[theorem]{Problem}
\newif\ifabstract
\newif\iffull
\begin{document}

\title{The Computational Complexity of Portal \\ and Other 3D Video Games}
\author{%
  Erik D. Demaine%
    \thanks{MIT Computer Science and Artificial Intelligence Laboratory, 32 Vassar Street, Cambridge, MA 02139, USA, \protect\url{{edemaine,jaysonl}@mit.edu}}
\and
  Joshua Lockhart%
  \thanks{Work started while author was at School of Electronics, Electrical Engineering and Computer Science,
  Queen's University, Belfast, BT7 1NN, UK. 
  Author's current address: Department of Computer Science, University College London, London, WC1E 6BT, UK, \protect\url{joshua.lockhart.14@ucl.ac.uk}}
\and
  Jayson Lynch\footnotemark[1]
}
\date{}
\maketitle

\begin{abstract}
We classify the computational complexity of the popular video games Portal and Portal 2. We isolate individual mechanics of the game and prove NP-hardness, PSPACE-completeness, or (pseudo)polynomiality depending on the specific game mechanics allowed. One of our proofs generalizes to prove NP-hardness of many other video games such as Half-Life 2, Halo, Doom, Elder Scrolls, Fallout, Grand Theft Auto, Left 4 Dead, Mass Effect, Deus Ex, Metal Gear Solid, and Resident Evil.

These results build on the established literature on the complexity of video games \cite{HardGames12, NintendoFun2014, Forisek10, Lemmings04}.
\end{abstract}

%
%
%

\section{Introduction} 
In Valve's critically acclaimed \emph{Portal} franchise, the player guides \emph{Chell} (the game's silent protagonist) through a ``test facility'' constructed by the mysterious fictional organization Aperture Science.

Its unique game mechanic is the Portal Gun, which enables the player to place a pair of portals on certain surfaces within each test chamber. When the player avatar jumps into one of the portals, they are instantly transported to the other. This mechanic, coupled with the fact that in-game items can be thrown through the portals, has allowed the developers to create a series of unique and challenging puzzles for the player to solve as they guide Chell to freedom. Indeed, the Portal series has proved extremely popular, and is estimated to have sold more than 22 million copies \cite{PortalSales,SteamSpyPortal,Portal2Sales,SteamSpyPortal2}.

We analyze the computational complexity of Portal following the recent surge of interest in complexity analysis of video games and puzzles. Examples of previous work in this area includes the proof of NP-completeness of Minesweeper~\cite{Minesweeper00}, Clickomania~\cite{ClickomaniaGameTheory2000,Clickomania_MOVES2015}, and Tetris~\cite{Tetris03}, as well as PSPACE-completeness of Lemmings~\cite{Lemmings04, viglietta2015lemmings} and Super Mario Bros.~\cite{Mario_FUN2016}.
See also the surveys~\cite{AlgGameTheory_GONC3, NPPuzzles08, GPCBook09}.
Recent work has moved from puzzles to classic arcade games~\cite{HardGames12}, Nintendo games~\cite{NintendoFun2014}, 2D platform games~\cite{Forisek10}, and others~\cite{DBLP:journals/corr/Walsh14, floodIt, DBLP:journals/corr/abs-1203-1633}. 

In this paper, we explore how different game elements contribute to the computational complexity of Portal 1 and Portal 2 (which we collectively refer to as \emph{Portal}), with an emphasis on identifying gadgets and proof techniques that can be used in hardness results for other video games. We show that a generalized version of Portal with Emancipation Grills is weakly NP-hard (Section~\ref{sec:PortalFalling}); Portal with turrets is NP-hard (Section~\ref{sec:PortalTurrets}); Portal with timed door buttons and doors is NP-hard (Section~\ref{sec:PortalTimed}); Portal with High Energy Pellet launchers and catchers is NP-hard (Section~\ref{sec:PortalHEP}); Portal with Cubes, Weighted Buttons, and Doors is PSPACE-complete (Section~\ref{sec:pspace}); and Portal with lasers, laser relays, and moving platforms is PSPACE-complete (Section~\ref{sec:pspace}).

Table~\ref{PortalResultsTable} summarizes these results.
The first column lists the primary game mechanics of Portal we are investigating. The second and third column note whether the long fall or Portal Gun mechanics are needed for the proof. Section~\ref{sec:PortalDefinitions} provides more details about what these models mean.
The turret proof generalizes to many other video games, as described in \iffull Section~\ref{subsec:OtherGames}\else Section~\ref{sec:PortalTurrets}\fi.

\begin{table}[ht]
\centering
\centerline{
\begin{tabular}{|l|l|l|l|}
\hline
\textbf{Mechanics} & \textbf{Portals} & \textbf{Long Fall} & \textbf{Complexity} \\ \hline
\hline
None & No & Yes & P (\S  \ref{sec:PortalNavigation}) \\ \hline
Emancipation Grills, No Terminal Velocity & Yes & Yes & Weakly NP-hard (\S  \ref{sec:PortalFalling}) \\ \hline
Turrets & No & Yes & NP-hard (\S  \ref{sec:PortalTurrets}) \\ \hline
Timed Door Buttons and Doors & No & No & NP-hard (\S  \ref{sec:PortalTimed}) \\ \hline
HEP Launcher and Catcher & Yes & No & NP-hard (\S  \ref{sec:PortalHEP}) \\ \hline
Cubes, Weighted Buttons, Doors & No & No & PSPACE-comp. (\S  \ref{sec:pspace}) \\ \hline
Lasers, Relays, Moving Platforms & Yes & No & PSPACE-comp. (\S  \ref{sec:pspace}) \\ \hline
Gravity Beams, Cubes, Weighted Buttons, Doors & No & No & PSPACE-comp. (\S  \ref{sec:pspace}) \\ \hline
\end{tabular}
}
\caption{Summary of New Portal Complexity Results}
\label{PortalResultsTable}
\end{table}

\ifabstract
\later{
\section{Additional Notes About Portal}
\label{appen:defintions}}

\later{
Viglietta~\cite{HardGames12} argues that most modern games include Turing-complete scripting languages and thus could allow designers to create undecidable puzzles, and Portal (being based on the Source engine) is no exception. However, we feel that many games (including Portal) have consistent game mechanics from which puzzles are built, and that analyzing the complexity of games and puzzles that can be created within those rule sets is interesting and meaningfully captures what we think of as the game they compose.
}
\fi

\section{Definitions of Game Elements}
\label{sec:PortalDefinitions}

Portal is a \emph{platform game}: a single-player game with the goal of navigating the avatar from a start location to an end location of a series of stages, called \emph{levels}. The gameplay in Portal involves walking, turning, jumping, crouching, pressing buttons, picking up objects, and creating portals. The locations and movement of the avatar and all in-game objects are discretized.
For convenience we make a few assumptions about the game engine, which we feel preserve the essential character of the games under consideration, while abstracting away certain irrelevant implementation details in order to make complexity analysis more amenable:
\begin{itemize}
	\item Positions and velocities are represented as fixed-point numbers.\footnote{The actual game uses floats in many instances. We claim that all our proofs work if we round the numbers involved, and only encode the problems in the significand.}
	\item Time is discretized and represented as a fixed-point number.
	\item At each discrete time step, there is only a constant number of possible user inputs: button presses and the cursor position.
	\item The cursor position is represented by two fixed-point numbers.
\end{itemize}

In Portal, a \emph{level} is a description of the polygonal surfaces in 3D defining the geometry of the map, along with a list of game elements with their locations and, if applicable, connections to each other. 
In general, we assume that the level can be specified succinctly as a
collection of polygons whose coordinates may have polynomial precision,
(and thus so can the player coordinates), and thus exponentially large values
(ratios).  This assumption matches the Valve Map Format (VMF) used to
specify levels in Portal, Portal~2, and other Source games \cite{VMF}.
A realistic special case is where we aim for \emph{pseudopolynomial}
algorithms, that is, we assume that the coordinates of the polygons and
player are assumed to have polynomial values/ratios (logarithmic precision),
as when the levels are composed of explicit discrete blocks.
This assumption matches the voxel-based P2C format sometimes used for
community-created Portal~2 levels \cite{P2C}.

In this work, we consider the following decision problem, which asks whether a given level has a path from the given start location the end location.
\begin{problem}
\textsc{Portal}

\textit{Parameter}: A set of allowed gameplay elements.

\textit{Input}:
A description of a Portal level using only allowed gameplay elements, and
spatial coordinates specifying a start and end location.

\textit{Output}: Whether there exists a path traversable by a Portal player from the start location to the end location.
\end{problem}

The key game mechanic, the \emph{Portal Gun}, creates a portal on the closest surface in a direct line from the player's avatar if the surface is of the appropriate type. We call surfaces that admit portals \emph{portalable}. There are a variety of other gameplay elements which can be a part of a Portal level. Because we use many these in our proofs, we describe them in detail below.

\newcounter{papercount}
\begin{center}
    \captionsetup{justification=centering}
    \begin{minipage}{.68\textwidth}
    \centering
\begin{enumerate}
\setcounter{enumi}{\the\value{papercount}}
    \item A \emph{long fall} is a drop in the level terrain that the avatar can jump down from without dying, but cannot jump up. \setcounter{papercount}{\the\value{enumi}}
\end{enumerate}
        \end{minipage}%
        \hfill
    \begin{minipage}{0.28\textwidth}
    \centering
\includegraphics[height=0.11\textheight]{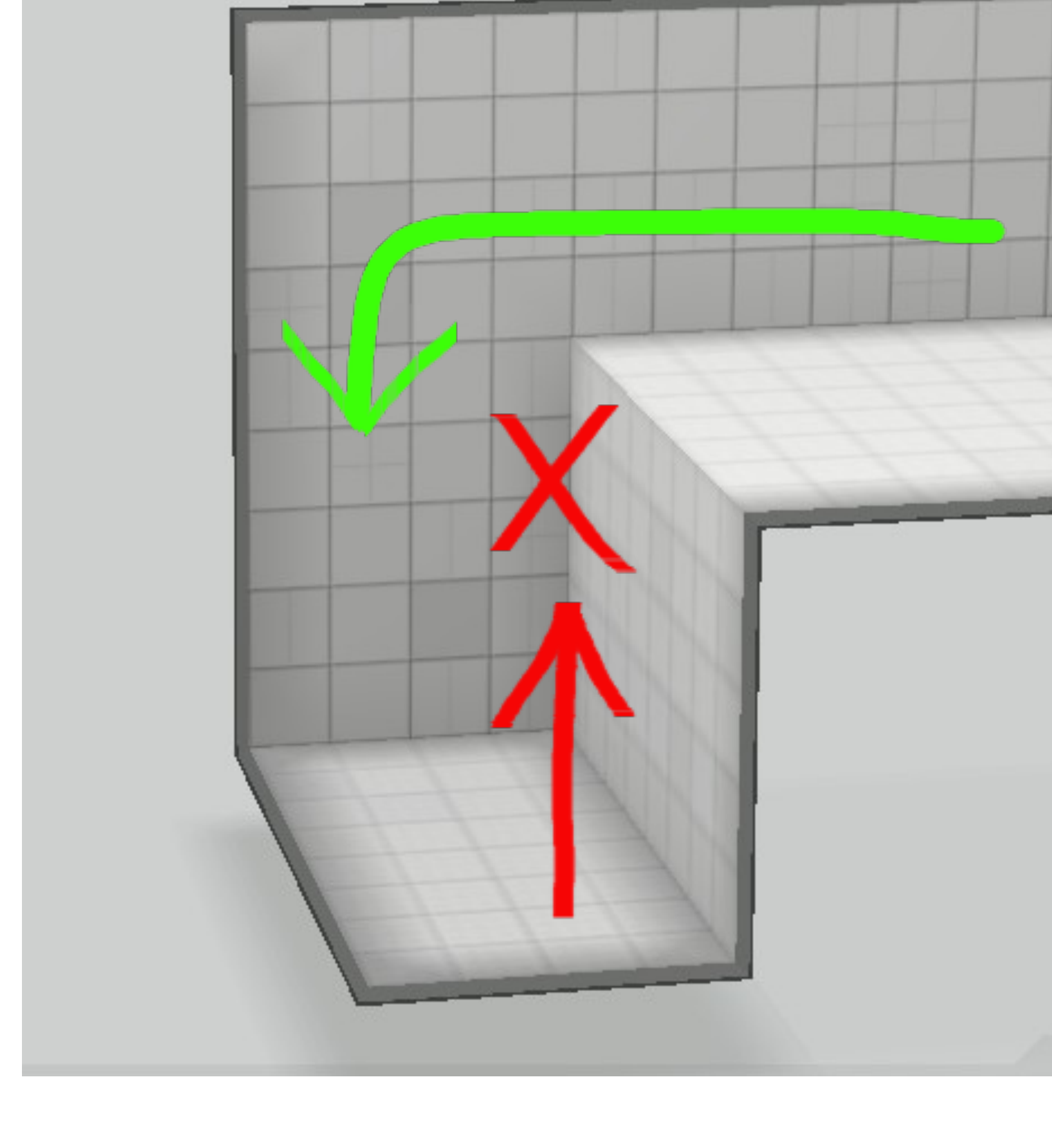}
        \captionof*{figure}{It's a long way down.}
        \label{fig:longfall}
        \end{minipage}
\end{center}

\begin{center}
    \captionsetup{justification=centering}
    \begin{minipage}{.68\textwidth}
    \centering
    
\begin{enumerate}
\setcounter{enumi}{\the\value{papercount}}
    \item A \emph{door} can be open or closed, and can be traversed by the player's avatar if and only if it is open. In Portal, many mechanics can act as doors, such as literal doors, laser fields, and moving platforms. On several occasions we will assume the door being used also blocks other objects in the game, such as High Energy Pellets or lasers, which is not generally true. 
    \setcounter{papercount}{\the\value{enumi}}
\end{enumerate}
        \end{minipage}
        \hfill
    \begin{minipage}{0.28\textwidth}
    \centering
            \includegraphics[width=0.68\linewidth]{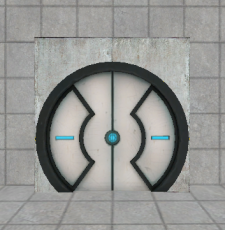}
        \captionof*{figure}{A Door in Portal 2}
        \label{fig:Door}
        \end{minipage}
\end{center}

\begin{center}
    \captionsetup{justification=centering}
    \begin{minipage}{.68\textwidth}
    \centering
\begin{enumerate}
\setcounter{enumi}{\the\value{papercount}}
    \item A \emph{button} is an element which can be interacted with when the avatar is nearby to change the state of the level, e.g., a button to open or close a door.\\ \vspace{2mm}
    \item A \emph{timed button} will revert back to its previous state after a set period of time, reverting its associated change to the level too, e.g., a timed button which opens a door for 10 seconds, before closing it again.
    \setcounter{papercount}{\the\value{enumi}}
\end{enumerate}
        \end{minipage}
        \hfill
    \begin{minipage}{0.28\textwidth}
    \centering
            \includegraphics[width=0.28\linewidth]{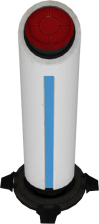}
        \captionof*{figure}{Timed Button}
        \label{fig:Button}
        \end{minipage}
\end{center}

\begin{center}
    \captionsetup{justification=centering}
    \begin{minipage}{.68\textwidth}
    \centering
    \begin{enumerate}
\setcounter{enumi}{\the\value{papercount}}
\item A \emph{weighted floor button} is a an element which changes the state of a level when one or more of a set of objects is placed on it. In Portal, the 1500 Megawatt Aperture Science Heavy Duty Super-Colliding Super Button is an example of a weighted floor button which activates when the avatar or a Weighted Storage Cube is placed on top of it. An activated weighted floor button can activate other mechanics such as doors, moving platforms, laser emitters, and gravitational beam emitters.
\setcounter{papercount}{\the\value{enumi}}
\end{enumerate}
        \end{minipage}
        \hfill
    \begin{minipage}{0.28\textwidth}
    \centering
\includegraphics[width=0.68\linewidth]{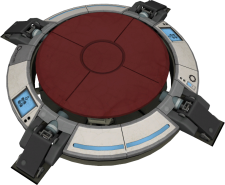}
        \captionof*{figure}{Heavy Duty Super-Colliding Super Button}
        \label{HeavyDutySuper-CollidingSuperButton}
        \end{minipage}
\end{center}

\begin{center}
    \captionsetup{justification=centering}
    \begin{minipage}{.68\textwidth}
    \centering
    \begin{enumerate}
\setcounter{enumi}{\the\value{papercount}}
\item \emph{Blocks} can be picked up and moved by the avatar. The block can be set down and used as a platform, allowing the avatar to reach higher points in the level. While carrying a block, the avatar will not fit through small gaps, rendering some places inaccessible while doing so. In Portal, the Weighted Storage Cube is an example of a block that can be jumped on or used to activate weighted floor buttons. We will refer to Weighted Storage Cubes, Companion Cubes, etc.\ as simply \emph{cubes}.
\setcounter{papercount}{\the\value{enumi}}
\end{enumerate}
        \end{minipage}
        \hfill
    \begin{minipage}{0.28\textwidth}
    \centering
\includegraphics[height=0.11\textheight]{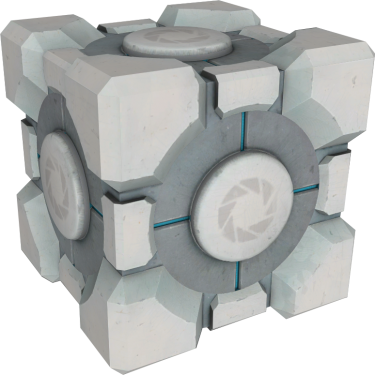}
        \captionof*{figure}{Weighted Storage Cube}
        \label{fig:WeightedStorageCube}
        \end{minipage}
\end{center}

\begin{center}
    \captionsetup{justification=centering}
    \begin{minipage}{.68\textwidth}
    \centering
    \begin{enumerate}
\setcounter{enumi}{\the\value{papercount}}
\item A \emph{Material Emancipation Grid}, also called an \emph{Emancipation Grill} or \emph{fizzler}, destroys some objects which attempt to pass through it, such as cubes and turrets. When the avatar passes through an Emancipation Grid, all previously placed portals are removed from the map.
\setcounter{papercount}{\the\value{enumi}}
\end{enumerate}

        \end{minipage}
        \hfill
    \begin{minipage}{0.28\textwidth}
    \centering
\includegraphics[ height=0.11\textheight]{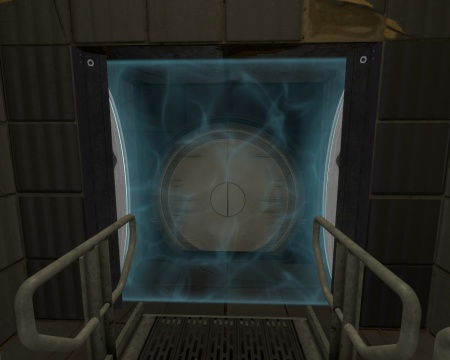}
        \captionof*{figure}{Emancipation Grid}
        \label{fig:EmancipationGrid}

        \end{minipage}
\end{center}

\begin{center}
    \captionsetup{justification=centering}
    \begin{minipage}{.68\textwidth}
    \centering
    \begin{enumerate}
\setcounter{enumi}{\the\value{papercount}}
\item The \emph{Portal Gun} allows the player to place portals on portalable surfaces within their line of effect. Portals are orange or blue. If the player jumps into an orange (blue) portal, they are transported to the blue (orange) portal. Only one orange portal and one blue portal may be placed on the level at any given time. Placing a new orange (blue) portal removes the previously placed orange (blue) portal from the level.
\setcounter{papercount}{\the\value{enumi}}
\end{enumerate}

        \end{minipage}
        \hfill
    \begin{minipage}{0.28\textwidth}
    \centering
            \includegraphics[width=0.5\linewidth]{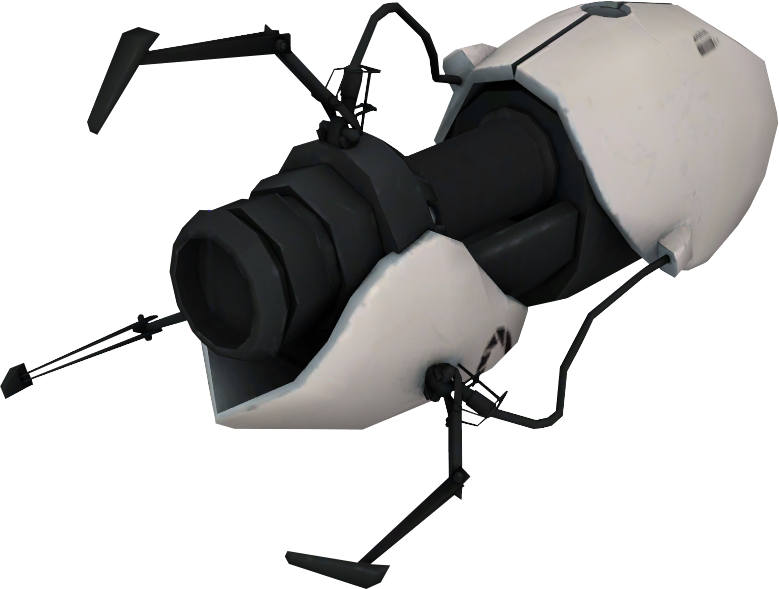}
        \captionof*{figure}{Portal Gun}
        \label{fig:PortalGun}

        \end{minipage}
\end{center}

\begin{center}
    \captionsetup{justification=centering}
    \begin{minipage}{.68\textwidth}
    \centering
    \begin{enumerate}
\setcounter{enumi}{\the\value{papercount}}
 \item A \emph{High Energy Pellet} (HEP) is a spherical object which moves in a straight line until it encounters another object. HEPs move faster than the player avatar. If they collide with the player avatar, then the avatar is killed. If a HEP encounters a wall or another object, it will bounce off it with equal angle of incidence and reflection. In Portal, some HEPs have a finite lifespan, which is reset when the HEP passes through a portal, and others have an unbounded lifespan. These unbounded HEPs are referred to as \emph{Super High Energy Pellets}. \ifabstract HEP's are created by \emph{HEP Launchers} and can activate \emph{HEP Catchers} if they come in contact with them.\fi
 \setcounter{papercount}{\the\value{enumi}}
\end{enumerate}
        \end{minipage}
\hfill  
      \begin{minipage}{0.28\textwidth}
    \centering
            \includegraphics[width=0.9\linewidth]{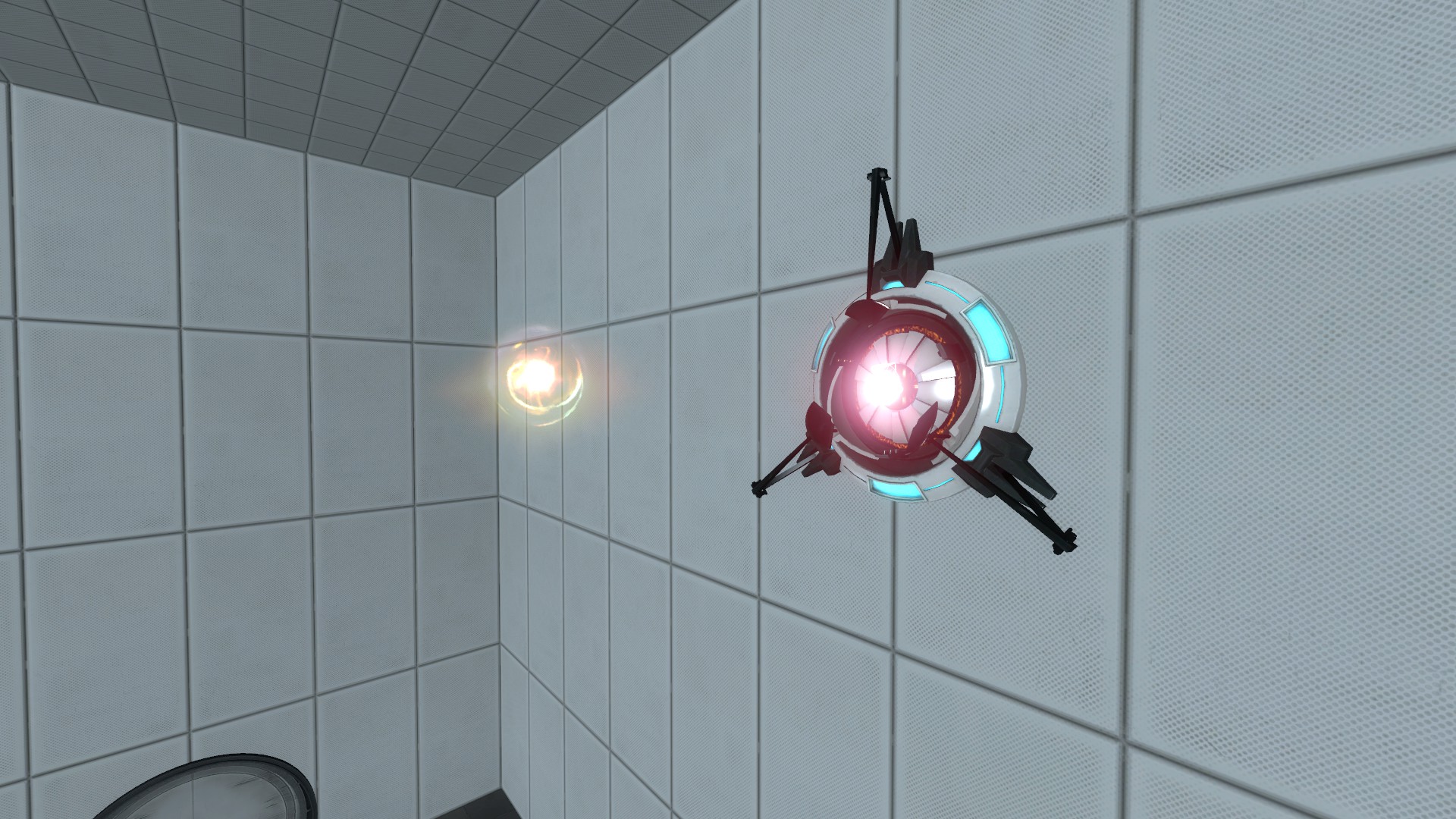}
        \captionof*{figure}{A HEP about to reach a HEP Collector}
        \label{fig:HEP}

        \end{minipage}
\end{center}
\iffull
\begin{center}
    \captionsetup{justification=centering}
    \begin{minipage}{.68\textwidth}
    \centering
    \begin{enumerate}
\setcounter{enumi}{\the\value{papercount}}
 \item A \emph{HEP Launcher} emits a HEP at an angle normal to the surface upon which it is placed. These are launched when the HEP launcher is activated or when the previously emitted HEP has been destroyed.
\setcounter{papercount}{\the\value{enumi}}
\end{enumerate}
        \end{minipage}
        \hfill
    \begin{minipage}{0.28\textwidth}
    \centering
 \includegraphics[height=0.11\textheight]{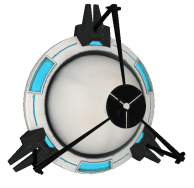}
        \captionof*{figure}{HEP Launcher}
        \label{fig:HEPLauncher}

        \end{minipage}
\end{center}

\begin{center}
    \captionsetup{justification=centering}
    \begin{minipage}{.68\textwidth}
    \centering
    \begin{enumerate}
\setcounter{enumi}{\the\value{papercount}}
\item A \emph{HEP Catcher} is a device which is activated if it is ever hit by a HEP. In Portal, this device can act as a button, and is commonly used to open doors or move platforms when activated.
\setcounter{papercount}{\the\value{enumi}}
\end{enumerate}
        \end{minipage}
        \hfill
    \begin{minipage}{0.28\textwidth}
    \centering
\includegraphics[height=0.11\textheight]{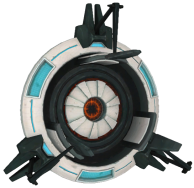}
        \captionof*{figure}{HEP Catcher}
        \label{fig:HEPCatcher}

        \end{minipage}
\end{center}
\fi
\begin{center}
    \captionsetup{justification=centering}
    \begin{minipage}{.68\textwidth}
    \centering
    \begin{enumerate}
\setcounter{enumi}{\the\value{papercount}}
\item A \emph{Laser Emitter} emits a \emph{Thermal Discouragement Beam} at an angle normal to the surface upon which it is placed. The beam travels in a straight line until it is stopped by a wall or another object. The beam causes damage to the player avatar and will kill the avatar if they stay close to it for too long. We call the beam and its emitter a \emph{laser}.
\setcounter{papercount}{\the\value{enumi}}
\end{enumerate}
                \end{minipage}
        \hfill
    \begin{minipage}{0.28\textwidth}
    \centering
\includegraphics[height=0.11\textheight]{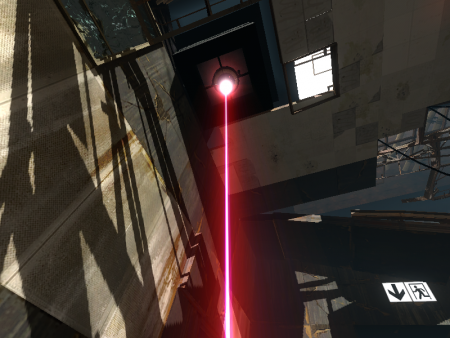}
        \captionof*{figure}{A Laser Emitter and Thermal Discouragement Beam.}
        \label{fig:laser}

        \end{minipage}
\end{center}

\begin{center}
    \captionsetup{justification=centering}
    \begin{minipage}{.68\textwidth}
    \centering
    \begin{enumerate}
\setcounter{enumi}{\the\value{papercount}}
\item A \emph{Laser Relay} is an object which can activate other objects while a laser passes through it.
\vspace{2mm}
 \item A \emph{Laser Catcher} is an object which can activate other objects while a contacts it.
\setcounter{papercount}{\the\value{enumi}}
\end{enumerate}
        \end{minipage}
        \hfill
    \begin{minipage}{0.28\textwidth}
    \centering
    \vspace{4mm}
\includegraphics[height=0.11\textheight]{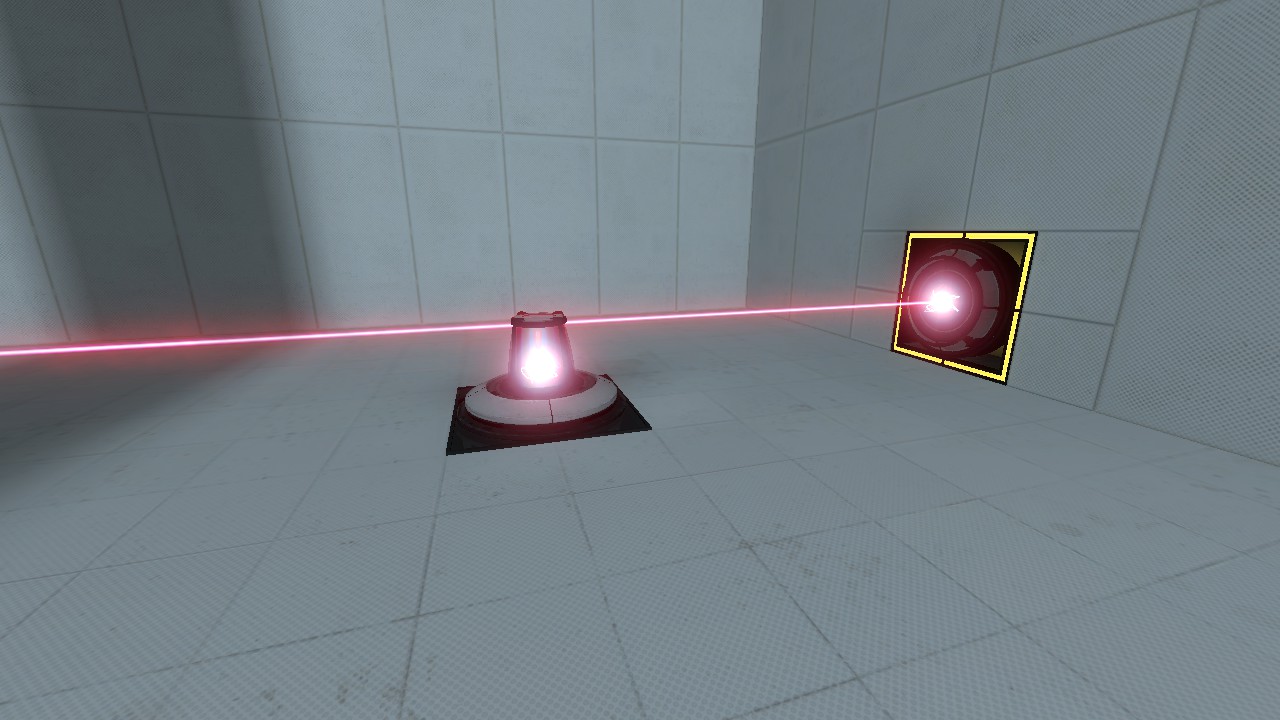}
        \captionof*{figure}{An active laser relay and laser catcher.}
        \label{fig:Laser}
        \end{minipage}

\end{center}

\begin{center}
    \captionsetup{justification=centering}
    \begin{minipage}{.68\textwidth}
    \centering
    \begin{enumerate}
\setcounter{enumi}{\the\value{papercount}}
\item A \emph{Moving Platform} is a solid polygon with an inactive and an active position. It begins in the inactive position and will move in a line at a constant velocity to the active position when activated. If it becomes deactivated it will move back to the inactive position with the opposite velocity.
\setcounter{papercount}{\the\value{enumi}}
\end{enumerate}
        \end{minipage}
        \hfill
    \begin{minipage}{0.28\textwidth}
    \centering
    \vspace{4mm}
\includegraphics[height=0.11\textheight]{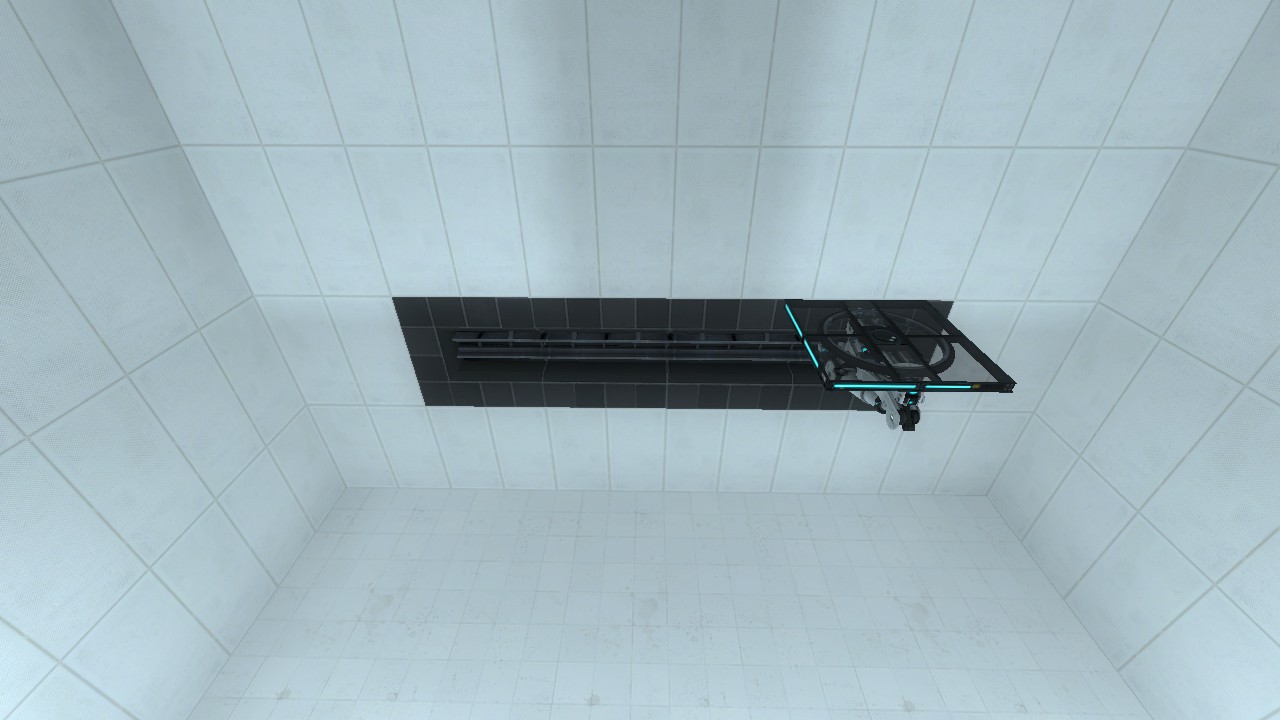}
        \captionof*{figure}{A horizontal moving platform.}
        \label{fig:MovingPlatform}

        \end{minipage}
\end{center}

\begin{center}
    \captionsetup{justification=centering}
    \begin{minipage}{.68\textwidth}
    \centering
    \begin{enumerate}
\setcounter{enumi}{\the\value{papercount}}
\item A \emph{Turret} is an enemy which cannot move on its own. If the player's avatar is within the field of view of a turret, the turret will fire on the avatar. If the avatar is shot sufficiently many times within a short period of time, the avatar will die.
\setcounter{papercount}{\the\value{enumi}}
\end{enumerate}
        \end{minipage}
        \hfill
    \begin{minipage}{0.28\textwidth}
    \centering
\includegraphics[height=0.11\textheight]{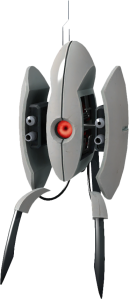}
        \captionof*{figure}{Turret from Portal 2}
        \label{fig:Turret}

        \end{minipage}
\end{center}

\begin{center}
    \captionsetup{justification=centering}
    \begin{minipage}{.68\textwidth}
    \centering
    \begin{enumerate}
\setcounter{enumi}{\the\value{papercount}}
\item An \emph{Excursion Funnel}, also called a \emph{Gravitational Beam Emitter} emits a gravitational beam normal to the surface upon which it is placed. The gravitational beam is directed and will move small objects at a constant velocity in the prescribed direction. Importantly, it will carry Weighted Storage Cubes and the player avatar. Gravitational Beam Emitters can be switched on and off, as well as flipping the direction of the gravitational beam they emit.
\setcounter{papercount}{\the\value{enumi}}
\end{enumerate}
        \end{minipage}
        \hfill
    \begin{minipage}{0.28\textwidth}
    \centering
\includegraphics[height=0.11\textheight]{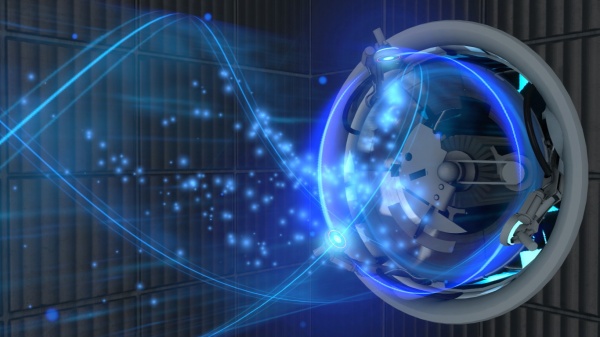}
        \captionof*{figure}{A Gravity Beam and Excursion Funnel.}
        \label{fig:ExcursionFunnel}

        \end{minipage}
\end{center}

\later{
There are two main pieces of software for creating levels in Portal 2: the \emph{Puzzle Maker} (also known as the \emph{Puzzle Creator}), and the \emph{Valve Hammer Editor} equipped with the \emph{Portal 2 Authoring Tools}. Both of these tools are publicly available for players to create their own levels. The Puzzle Maker is a more restricted editor than Hammer, with the advantage of providing a more user-friendly editing experience. 
However, levels created in the Puzzle Maker must be coarsely discretized, with coarsely discretized object locations, and must be made of voxels. In particular, the Puzzle Maker uses the P2C file format, which restricts it to pseudopolynomial instances (while Hammer uses VMF). Furthermore, no HEP launchers or additional doors can be placed in Puzzle Maker levels. We will often comment on which of our reductions can be constructed with the additional Puzzle Maker restrictions (except, of course, the small level size and item count), but this distinction is not a primary focus of this work.}

\section{Movement is Easy}
\label{sec:PortalNavigation}
\ifabstract\later{
\section{Movement is Easy}
\label{appen:PortalNavigation}}
\fi

In this section, we prove a basic result that the core mechanism of portals
does not affect the complexity of traversing a level.

\begin{theorem}
\label{thm:pseudopoly}
\textsc{Portal} with portals can be solved in pseudopolynomial time.
\end{theorem}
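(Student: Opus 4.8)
The plan is to reduce \textsc{Portal} with portals to a reachability query in an explicitly built configuration graph, exploiting the fact that in a portals-only level the entire game state is described by only a constant number of fixed-point numbers.

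First I would pin down what constitutes the state. Since the only non-navigational mechanic allowed is the Portal Gun, the future evolution of the level (given future inputs) is determined by: the avatar's position (three fixed-point coordinates), its velocity (three coordinates), a constant number of discrete mode bits (standing/crouching, on-ground/airborne, etc.), and the configurations of the orange and blue portals, each being a portalable-surface identifier together with a two-dimensional position on that surface, or a flag ``not placed.'' Because the level geometry is a fixed bounded collection of polygons and the game has a terminal velocity, every one of these coordinates stays within a range of polynomial value in the pseudopolynomial regime (coordinates of logarithmic precision); together with the discretization of positions, velocities, time, and inputs assumed in Section~\ref{sec:PortalDefinitions}, this bounds the number of reachable states by a pseudopolynomial.

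Next I would build the transition graph on this state set. From any state there are at most a constant number of button-press combinations and pseudopolynomially many cursor positions, hence pseudopolynomially many inputs; for each input the successor state is computed by simulating one discrete time step: apply gravity and the velocity/position update, resolve collisions of the avatar's swept volume against the polynomially many level polygons, apply portal teleportation whenever the avatar crosses a placed portal (relocating it to the partner portal and rotating its velocity by the appropriate rigid motion), and, if the fire button is held, ray-cast from the avatar along the cursor direction to the first portalable surface and overwrite the corresponding portal. Each such step is a polynomial-time geometric computation. The source vertex is the state with the avatar at the start location, zero velocity, and no portals; the target set is all states whose avatar position equals the end location. The level is solvable if and only if the target is reachable from the source, which we decide by breadth-first search; since the graph has pseudopolynomially many vertices, this also certifies that a solvable level has an input sequence of pseudopolynomial length.

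The running time is (number of states) $\times$ (inputs per state) $\times$ (cost of one simulated step), each factor pseudopolynomial, so the algorithm is pseudopolynomial overall, and genuinely polynomial on instances whose coordinates have polynomial value. I expect the main obstacle to be not the graph search but making ``one simulated time step'' simultaneously faithful and polynomial-time: correctly resolving the continuous within-a-step motion of the avatar so that wall collisions and portal crossings are detected in the right order (a segment- or swept-volume-versus-polygon test), handling degenerate firings where a new portal would overlap its partner or straddle a non-portalable seam, and carefully confirming that with terminal velocity the per-step displacement — and hence the entire reachable region — never leaves polynomially-valued coordinates, so that the state count really is pseudopolynomial. Once the step function is nailed down, correctness of the reduction and the time bound are routine.
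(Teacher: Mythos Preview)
Your proposal is correct and takes essentially the same approach as the paper: build the configuration graph whose vertices are tuples of avatar position, velocity, orientation, and the two portal placements, connect states by input-induced transitions, and run a reachability search; you simply flesh out the simulation and bounding details that the paper leaves implicit.
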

\begin{proof}
We construct a state-space graph of the Portal level. Each vertex represents a tuple comprised of the avatar's position vector, the avatar's velocity vector, the avatar's orientation, the position vector of the blue portal, and the position vector of the orange portal. The vertices are connected with directed edges encoding the state transitions caused by user input. We can then search for a path from the initial game state to any of the winning game states in time polynomial in the size of the graph.

Thus we have a pseudopolynomial-time algorithm for solving \textsc{Portal} in this case.
\end{proof}

\section{Portal with Emancipation Grills is Weakly NP-hard}
\label{sec:PortalFalling}

In this section, we prove that \textsc{Portal} with portals and Emancipation Grills is weakly NP-hard by reduction from \textsc{Subset Sum} \cite{NPBook}, which is defined like so.
\begin{problem}
\textsc{Subset Sum}

\textit{Input:} A set of integers $A=\{a_1,a_2,\dots a_n\}$, and a target value $t$.

\textit{Output:} Whether there exists a subset $\{s_1,s_2,\dots,s_m\}\subseteq A$ such that 
\begin{align*}
\sum_{i=1}^{m}s_i=t.
\end{align*}
\end{problem}
The reduction involves representing the integers in $A$ as distances which are translated into the avatar's velocity. More explicitly, the input $A$ will be constructed from long holes the avatar can fall down, and the target will be encoded in a distance the avatar must launch themselves after falling. In the game, there is a maximum velocity the player avatar can reach. For the next theorem, it is necessary to consider  Portal without bounded terminal velocity.\footnote{Alternatively, any terminal velocity which scales at least polynomially in the level size suffices.}

\begin{theorem}
\textsc{Portal} with portals, long fall, Emacipation Grills, and no terminal velocity is weakly NP-hard.
\end{theorem}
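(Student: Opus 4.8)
The plan is to reduce from \textsc{Subset Sum}. Given $A=\{a_1,\dots,a_n\}$ and target $t$, I would build a level in which the avatar is in free fall from the start until a final verification region, so that its downward speed acts as an accumulator: because there is no terminal velocity and no friction in midair, falling through a vertical hole of depth $h$ converts exactly $2gh$ of ``height'' into squared speed, and this map from accumulated height to squared speed is injective and unbounded. The level consists of $n$ \emph{choice gadgets} in sequence, the $i$-th letting the avatar add either $0$ or $2g a_i$ to its squared speed, followed by a verification gadget that lets the avatar reach the exit if and only if its squared speed is exactly $2g(D+t)$ for a fixed offset $D$; then the exit is reachable iff some subset of $A$ sums to $t$.

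For the choice gadget, gadget $i$ is entered while falling at speed $v$ at a fixed altitude $Z_{i-1}$ and left while falling at altitude $Z_i<Z_{i-1}$, with $d:=Z_{i-1}-Z_i$ the same for all $i$ (so $D:=nd$). It offers two forced routes: Route $0$ is a straight chute of drop $d$, leaving the avatar with squared speed $v^2+2gd$; Route $1$ is a chute of drop $d+a_i$ at whose bottom the only portalable surfaces are a floor (at altitude $Z_i-a_i$) and a ceiling (at altitude $Z_i$), so the avatar is forced to place its single portal pair there, fall through, and exit the ceiling portal still moving downward at altitude $Z_i$ with squared speed $v^2+2g(d+a_i)$ -- the portal lift supplying the extra $2ga_i$ of energy ``for free,'' which Portal's physics permits. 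After all $n$ gadgets the squared speed is $2g(D+\sigma)$ where $\sigma=\sum_{i\in S}a_i$ for the set $S$ of gadgets on which Route $1$ was chosen. The verification gadget then converts this downward speed to horizontal speed (again by a single forced portal pair, a floor portal under the fall and a wall portal) at a height $H$ above a floor, so the avatar flies horizontal distance $2\sqrt{(D+\sigma)H}$ and lands on a narrow ledge -- with a short safe path to the exit -- placed at distance $2\sqrt{(D+t)H}$; an overshoot or undershoot drops the avatar into an inescapable long fall. Scaling all $a_i$ by a global factor and choosing $H$ large makes the landing positions $2\sqrt{(D+k)H}$ of consecutive achievable sums $k$ differ by more than the avatar's width plus any displacement achievable by midair steering over the flight, while keeping all coordinates bounded by a polynomial in $t$, $n$, and $\max_i a_i$, hence specifiable with polynomially many bits; fixed-point rounding perturbs the landing position negligibly compared to the ledge width, as in the footnote.

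Emancipation Grills are what make this airtight: I would place one immediately after each Route-$1$ portal lift and on entering the verification region, so that it deletes the avatar's portals, and I would make every surface non-portalable except the designated floor/ceiling pairs inside the Route-$1$ chutes and the launcher. Then the only portals the avatar can ever create are the intended lifts -- each pinned to magnitude $a_i$ by geometry -- and the intended launch, which rules out the key cheat of setting up a portal ``energy loop'' (repeatedly falling and being teleported back up) to reach an arbitrary speed independent of the choices; since the avatar also never touches a floor or wall during the accumulation phase, no friction or collision can bleed off speed either. Correctness then follows: if some $S\subseteq A$ sums to $t$, the avatar takes Route $1$ exactly on the gadgets in $S$, arrives with squared speed $2g(D+t)$, lands on the ledge, and reaches the exit; conversely any winning play picks out such an $S$. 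The reduction is polynomial, and by Theorem~\ref{thm:pseudopoly} (whose state-space argument still applies, the maximum speed being pseudopolynomial in the coordinates) the problem is \emph{weakly} NP-hard.

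The main obstacle is not the arithmetic -- the square-root relations are monotone, so calibrating the verification ledge is routine given the model's tolerance for exponentially large coordinates and for rounding -- but verifying that no unintended play exists: that with the given Emancipation Grill placements and restricted portalable surfaces there is provably no portal configuration producing a free-energy loop, that midair steering cannot move the landing point from one achievable sum to another, and that the avatar cannot intentionally dissipate speed before verification. Getting the chute shapes, grill positions, and portalable-surface choices simultaneously right is where the real work lies.
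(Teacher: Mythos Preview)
Your proposal is correct and takes essentially the same approach as the paper: both reduce from \textsc{Subset Sum} by encoding each $a_i$ as an extra fall distance, accumulate downward speed through a sequence of portal-assisted drops with Emancipation Grills preventing reuse or free-energy loops, and verify by converting the accumulated speed to a horizontal launch onto a narrow target platform calibrated to the target sum. The only organizational difference is that the paper arranges the choices as a single row of wells with a stair-stepped ceiling (the avatar portals from the bottom of one well to the ceiling of a later well, the stair-step enforcing monotone progress), whereas you use $n$ sequential two-route gadgets; the underlying idea, the role of each game element, and the handling of midair steering and scaling are the same.
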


\begin{wrapfigure}{R}{0.5\textwidth}
  \centering
  \vspace{-2ex}
  \includegraphics[width=\linewidth]{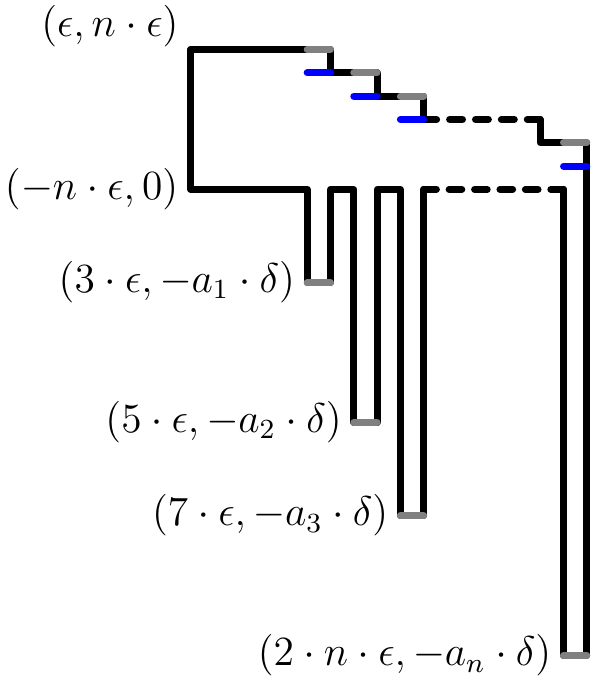}
  \caption{A cross-section of the element selection gadget, where $\delta=2\cdot n^2\cdot \epsilon \cdot t$. Grey lines are portalable surfaces and blue lines are Emancipation Grills.}
  \vspace{-4ex}
  \label{fig:fallingDiagram}
\end{wrapfigure}

\begin{proof}
The elements of $A$ are represented by a series of wells, each of depth $4\cdot a_i\cdot n^2\cdot \epsilon\cdot t$, where $a_i\in A$ is the number to be encoded, $n$ is the number of elements in $A$, $t$ is the original target value of the \textsc{Subset Sum} problem, and $\epsilon$ is an expansion factor that is chosen to larger than the height of the avatar plus the height she can jump. An example is shown in Figure~\ref{fig:fallingDiagram}. The bottom of each well is a portalable surface, and the ceiling above each well is also a portalable surface. This construction will allow the avatar to shoot a portal to a ceiling tile, and to the bottom of the well they are falling into, selecting the next number. 

We cannot allow the avatar to select the same element more than once. The Emancipation Grills below each portalable ceiling serve to remove the portal from the ceiling of the well into which the avatar is currently falling, and to prevent sending a portal up to that same ceiling tile. The stair-stepped ceiling will allow the player to see the ceilings of all of the wells with index greater than the one they are currently at, but prevents them from seeing the portalable surface of the wells with a lower index. This construction ensures that the player can only select each element once using portals. The enforced order of choosing does not matter when solving \textsc{Subset Sum}.

Another concern is the ability to move horizontally while falling. This movement is a small, fixed velocity $v_h$. To solve this issue, we simply ensure the distance between each hole is greater than $2\cdot v_h\cdot  n\cdot  \epsilon$ so it is impossible to move from one hole to another while falling.

The distance between each step is $\epsilon$, thus to ensure the accumulated error from falling these distances does not impact the solution to the subset sum, we scale each $a_i$ by $4\cdot n^2\cdot\epsilon$, which is greater than the sum of all of these extra distances.

The verification gadget involves two main pieces: a single portalable surface on a wall, the launching portal, and a target platform for the player to reach. We place the launching portal so it can always be shot from the region above the wells. The target platform is placed $\epsilon$ units below the launching portal. The target platform is placed a distance of $2\cdot t\cdot n\cdot \epsilon$ away from the launching portal and in front of the portalable surface such that leaving the portalable surface with the target velocity will cause the player to reach the target platform. Because it takes 1 second to fall the vertical distance to the platform, the avatar will only reach the target if their velocity is equal to $2 \cdot t\cdot n^2\cdot \epsilon$. We make the target platform $n\cdot \epsilon$ on each side, to account for any errors incurred by the falling region or initial horizontal movement. This size is smaller than the difference if the target value $t$ differed by $1$. We now have an encoding of our numbers, a method of selecting them, and verification if they reach the target sum, completing the reduction.

With an acceleration of $\alpha$ and zero initial velocity a body will fall a distance $s = \frac{1}{2}\cdot\alpha\cdot t^2$. The time it takes to fall will thus be $t = \left(\frac{2s}{\alpha}\right)^{1/2}$. The resulting final velocity will be $v_f=\left(2\cdot\alpha s\right)^{1/2}$. If the player starts at an initial height $h$ and horizontal velocity $V_x$ then they will travel a total horizontal distance of $V_x\cdot\left(\frac{2h}{\alpha}\right)^{1/2}$. In our construction we have the player initially fall a total distance of 
\begin{align*}\sum_{i=1}^{k} 4\cdot s_i\cdot n^2\cdot \epsilon\cdot t\end{align*} 
If this solution is correct, the sum of the $a_j$ which are chosen will add to $t$ giving $2\cdot n^2 \cdot \epsilon\cdot t^2$. Because the verification portal on the wall is placed at a height of $\epsilon$ we arrive at our required distance to the verification platform of $t' = \left(8\cdot\alpha\cdot\epsilon\cdot n^2\cdot t^2\right)^{1/2}\left(\frac{2\epsilon}{\alpha}\right)^{1/2} = 2\cdot n\cdot \epsilon\cdot t$.
\end{proof}

\iffull
All of the game elements needed for this construction can be placed in the Puzzle Maker. However, this reduction would not be constructible because maps in the Puzzle Maker appear to be specified in terms of voxels. Because \textsc{Subset Sum} is only weakly NP-hard\cite{NPBook}, we need the values of the elements of $A$ to be exponential in $n$. Thus we need to describe the map in terms of coordinates specifying the polygons making up the map, whereas the Puzzle Maker specifies each voxel in the map.
\fi
\begin{corollary}
\textsc{Portal} with portals, long fall, and no terminal velocity can be solved in pseudopolynomial time.
\end{corollary}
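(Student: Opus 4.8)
The plan is to run the state-space search from the proof of Theorem~\ref{thm:pseudopoly} essentially unchanged, and merely verify that the two extra hypotheses here — long fall and the absence of a terminal-velocity cap — do not enlarge the reachable state space past pseudopolynomial size. A long fall is just a feature of the fixed level geometry (a drop the avatar may descend but not ascend), so it introduces no new component into a game state and is already subsumed by the polygonal description of the level used in Theorem~\ref{thm:pseudopoly}. The only genuine concern is the velocity: with no terminal velocity, one might worry that the avatar's speed could grow without bound — say, by repeatedly falling through a floor portal placed below a ceiling portal — so that the velocity coordinate of a state would range over infinitely many values.

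I would rule this out using the pseudopolynomial-regime assumption. There, the polygon coordinates, and hence all player coordinates, have polynomial values (logarithmic precision as fixed-point numbers), so in particular every position vector ranges over only pseudopolynomially many values. Velocities are likewise fixed-point numbers, so they too take only pseudopolynomially many distinct values, and whatever the physics engine does when a computed velocity would exceed the representable range (saturate, clamp, wrap), the stored velocity remains inside this finite set. As an informal sanity check one may also note that speeds stay moderate physically: any single uninterrupted fall inside a level whose bounding box has pseudopolynomial diameter $D$ has height at most $D$ and so contributes speed $O(\sqrt{\alpha D})$ under constant acceleration $\alpha$, and otherwise the avatar's speed changes by only a constant per time step. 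Either way the velocity coordinate lives in a pseudopolynomial set.

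With velocities so confined, I would build exactly the directed state-space graph of Theorem~\ref{thm:pseudopoly}: a vertex is a tuple consisting of the avatar's position, velocity, and orientation together with the positions of the blue and orange portals, and an edge joins two vertices whenever one of the constantly many admissible user inputs carries the first state to the second in a single time step (each such transition being computable in time polynomial in the input length). This graph has pseudopolynomially many vertices and edges, so a breadth-first search from the initial state that checks whether any winning state is reachable solves \textsc{Portal} in pseudopolynomial time. The one step requiring care is the argument of the previous paragraph — that dropping the terminal-velocity bound still leaves the velocity component in a pseudopolynomial set; granting that, the algorithm and its running-time analysis are identical to those of Theorem~\ref{thm:pseudopoly}.
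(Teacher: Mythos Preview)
Your proposal is correct and takes essentially the same approach as the paper: invoke the state-space search of Theorem~\ref{thm:pseudopoly} and argue that the added hypotheses introduce no extra state, so the graph stays pseudopolynomial in size. The paper's proof is even terser and, curiously, spends its one nontrivial sentence on Emancipation Grills (the companion mechanic from the preceding weak NP-hardness result) rather than on the ``no terminal velocity'' issue; your discussion of why velocities remain confined to a pseudopolynomial set is more careful than what the paper actually writes, though your physical $O(\sqrt{\alpha D})$ sanity check does not by itself handle the portal-loop scenario you yourself raise --- only the fixed-point representation argument does.
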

\begin{proof}
Theorem~\ref{thm:pseudopoly} gives a pseudopolynomial time algorithm for \textsc{Portal} with portals by constructing the full state-space graph. The state of the emancipation grids do not get changed over time and thus do not add additional state that needs to be stored. \iffull We can use the same vertices in the former proof, but now the edge transitions will differ if they player's avatar passes through any emancipation grids. This construction is still polynomial in the state-space and thus polynomial in the voxels in the level.\fi
\end{proof}

\section{Portal with Turrets is NP-hard}
\label{sec:PortalTurrets}
\ifabstract\later{
\section{Portal with Turrets is NP-hard}
\label{appen:PortalTurrets}}
\fi

In this section we prove \textsc{Portal} with turrets is NP-hard, and show that our method can be generalised to prove that many 3D platform games with enemies are NP-hard. Although enemies in a game can provide interesting and complex interactions, we can pull out a few simple properties that will allow them to be used as gadgets to reduce solving a game from 3-SAT, defined like so.
\begin{problem}
$3$-\textsc{SAT}

\textit{Input:} A $3$-CNF boolean formula $f$.

\textit{Output:} Whether there exists a satisfying assignment for $f$.
\end{problem}

This proof follows the architecture laid out in \cite{NintendoFun2014}:
\begin{enumerate}
\item The enemy must be able to prevent the player from traversing a specific region of the map; call this the \emph{blocked region}.
\item The player avatar must be able to enter an area of the map, which is path-disconnected from the blocked region, but from which the player can remove the enemy in the blocked region.
\item The level must contain long falls.
\end{enumerate}

We further assume that the behavior of the enemies is local, meaning an interaction with one enemy will not effect the behavior of another enemy if they are sufficiently far away. \iffull In many games one must also be careful about ammo and any damage the player may incur while interacting with the gadget, because these quantities will scale with the number of literals. Here long falls serve only in the construction of one-way gadgets, and can of course be replaced by some equivalent game mechanic. Similarly, a 2D game with these elements and an appropriate crossover gadget should also be NP-hard. \fi The following is a construction proving Portal with Turrets is NP-hard using this technique. Note that these gadgets can be constructed in the Portal 2 Puzzle Maker.
\iffull
\subsection{Literal}\fi
Each literal is encoded with a hallway with three turrents placed in a raised section, illustrated in Figure \ref{TurretLiteral}. The hallway must be traversed by the player, starting from ``Traverse In'', ending at ``Traverse Out''. If the turrets are active, they will kill the avatar before the avatar can cross the hallway or reach the turrets. The literal is true if the turrets are deactivated or removed, and false if they are active. The ``Unlock In'' and ``Unlock Out'' pathways allow for the player avatar to destroy the turrets from behind, deactivating them and counting as a true assignment of the literal.
\iffull
\begin{figure}[th]
  \centering
    \includegraphics[width=0.8\textwidth]{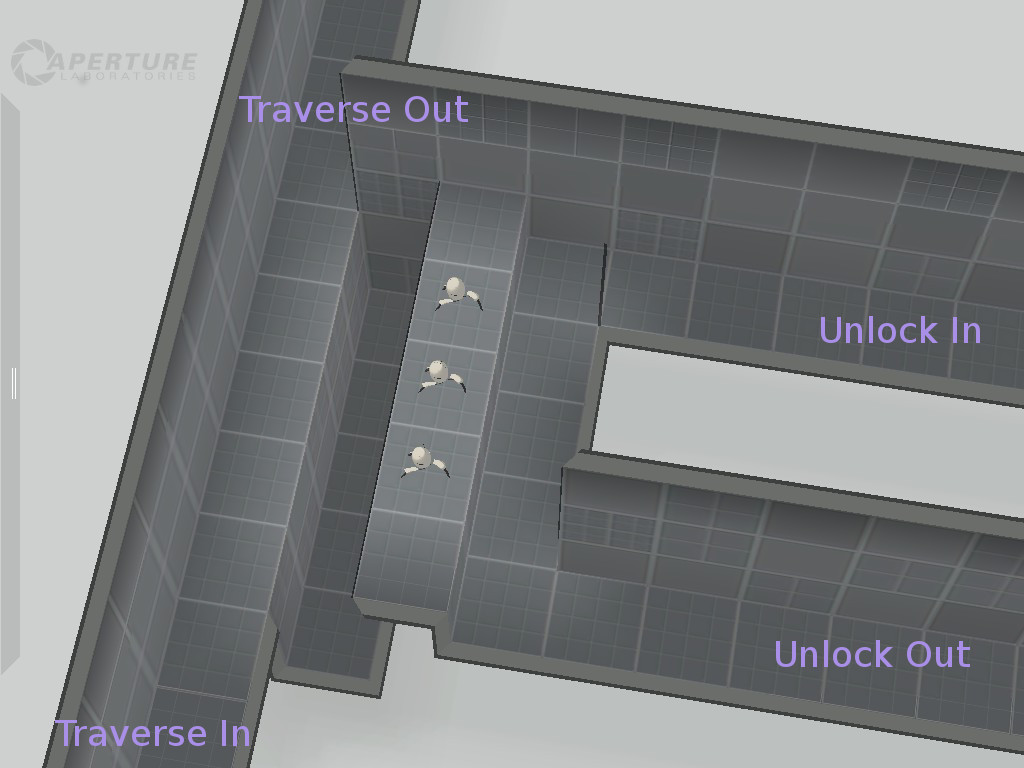}
    \caption{An example of a (currently) false literal constructed with Turrets. Labels added over the screenshot denote }
    \label{TurretLiteral}
\end{figure}
\fi
\iffull
\subsection{Variable}\fi
The variable gadget consists of a hallway that splits into two separate paths. Each hallway starts and ends with a one-way gadget constructed with a long fall. This construction forces the avatar to commit to one of the two paths. The gadget is shown in Figure~\ref{Split}. 
The hallways connect the ``Unlock In'' and ``Unlock Out'' paths of the literals corresponding to a particular variable. Furthermore, one path connects all of the true literals, the other connects all of the false literals.
\later{
\begin{figure}[!ht]
  \centering
    \includegraphics[width=.8\textwidth]{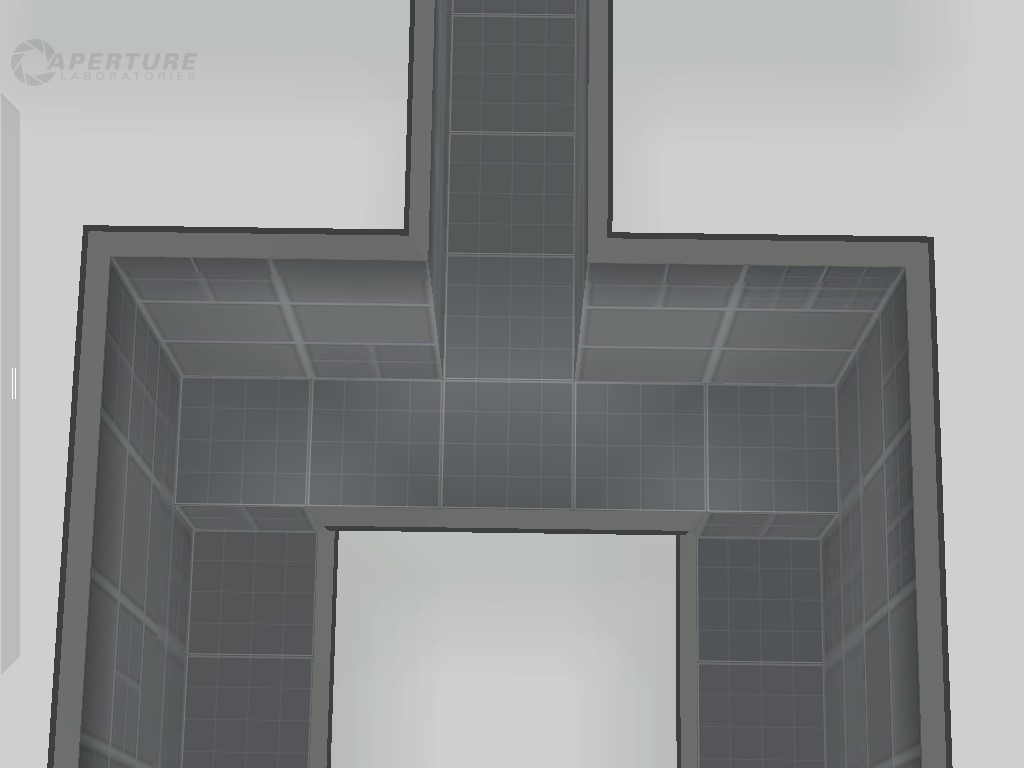}
    \caption{An example of the choice gadget used to construct variable gadgets.}
    \label{Split}
\end{figure}
}
\iffull
\subsection{Clause Gadget}\fi
Each clause gadget is implemented with three hallways in parallel. A section of each hallway is the ``Traverse In'' through the ``Traverse Out'' corresponding to a literal. The avatar can progress from one end of the clause to the other if any of the literals is true (and thus passable). Furthermore, each of the clause gadgets is connected in series. Figures \ref{TurretClauseGadget} and \ref{TurretClauseExample} illustrate a full clause gadget.
\begin{figure}[th]
  \centering
    \includegraphics[width=0.8\textwidth]{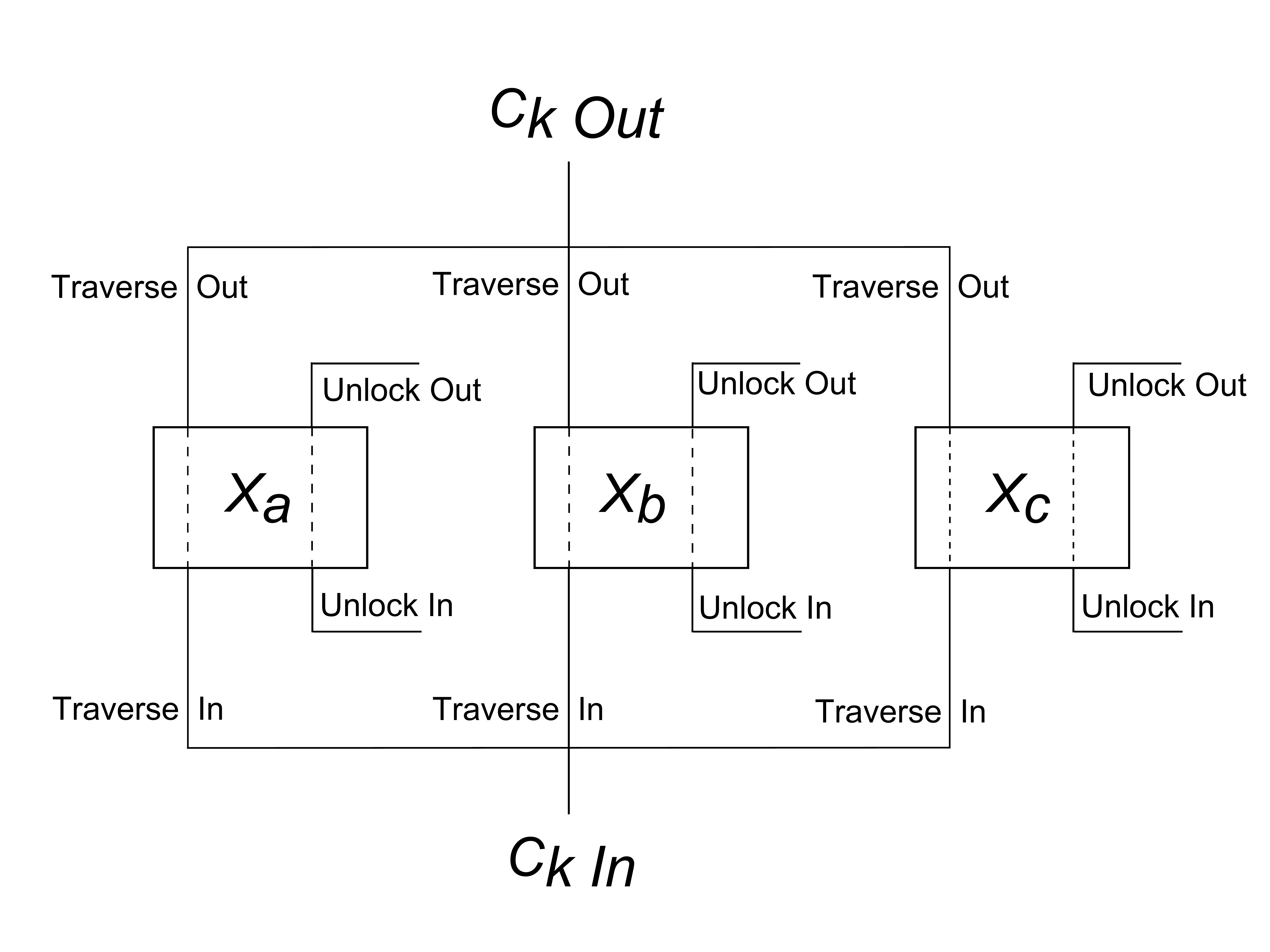}
    \caption{A diagram of clause $C_k$ which contains variables $x_a$, $x_b$, and $x_c$.}
    \label{TurretClauseGadget}
\end{figure}

\begin{figure}[th]
  \centering
    \includegraphics[width=0.6\textwidth]{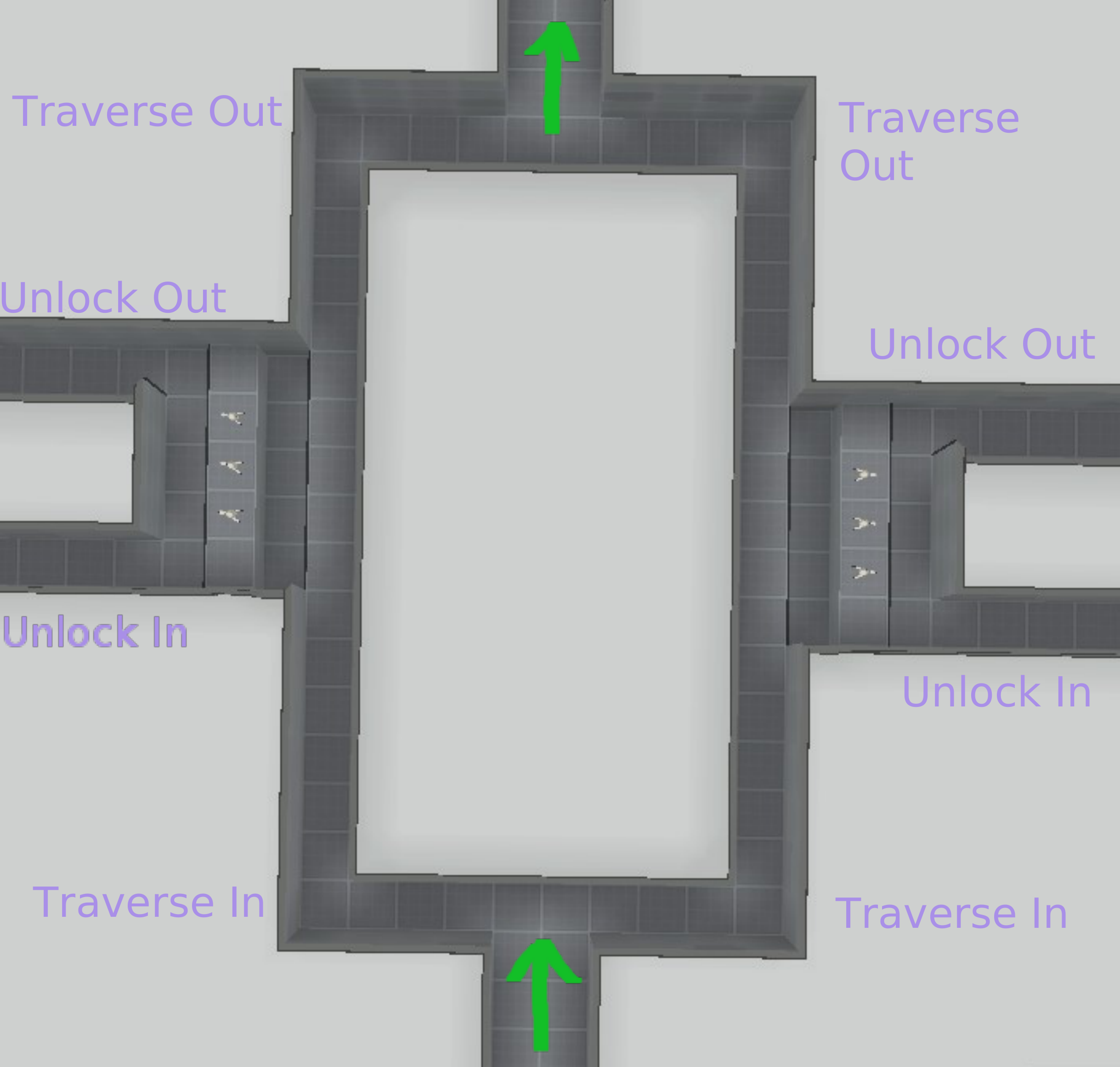}
    \caption{An example of a clause gadget with two literals.}
    \label{TurretClauseExample}
\end{figure}

\ifabstract
\begin{figure}[th]
  \centering
    \includegraphics[width=0.8\textwidth]{turret-literal-label.jpg}
    \caption{An example of a (currently) false literal constructed with Turrets. Labels added over the screenshot.}
    \label{TurretLiteral}
\end{figure}
\fi
\both{
\begin{theorem}
  \textsc{Portal} with Turrets is NP-hard.
\end{theorem}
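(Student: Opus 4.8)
The plan is to reduce from $3$-\textsc{SAT}, instantiating the literal, variable, and clause gadgets sketched above. Given a $3$-CNF formula $f$ with variables $x_1,\dots,x_n$ and clauses $C_1,\dots,C_m$, I would assemble a level whose intended playthrough has two phases. In the \emph{assignment phase} the avatar passes through the $n$ variable gadgets in sequence; each variable gadget is the split hallway of Figure~\ref{Split}, bounded on entry and exit by long falls, so that the avatar must commit to exactly one of the two branches of $x_i$ and cannot backtrack. Walking the ``true'' branch of $x_i$ routes the avatar along an ``Unlock In''/``Unlock Out'' corridor that passes behind the turrets of every positive occurrence of $x_i$, letting the avatar destroy those turrets (safe because a turret has a limited field of view and is harmless from behind) and thereby permanently opening those literal hallways; the ``false'' branch does the same for the negative occurrences. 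In the \emph{verification phase} the avatar is then routed through the clause gadgets $C_1,\dots,C_m$ in series, each clause being three literal hallways (Figure~\ref{TurretLiteral}) sharing a common ``Traverse In'' and ``Traverse Out'' as in Figure~\ref{TurretClauseGadget}; $C_k$ is crossable iff at least one of its three literal hallways has had its turrets removed. The end location lies past $C_m$.

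Then I would verify the two directions of the reduction. If $f$ has a satisfying assignment, the avatar takes the corresponding branch in each variable gadget, clears exactly the literal hallways that the assignment makes true, and hence, since each clause contains a true literal, traverses every clause gadget and reaches the end. Conversely, if the level is solvable, the one-way long falls force the avatar through the variable gadgets in order, choosing exactly one branch of each and thus inducing a Boolean assignment; by construction the only way to clear the turrets of a literal hallway is from behind via an unlock corridor that is path-disconnected from the hallway itself and reachable only from the matching branch, so a literal hallway is passable only when its literal is true under the induced assignment. Since the avatar crossed every $C_k$, each clause has a true literal, so the induced assignment satisfies $f$. The construction is plainly polynomial in $|f|$: $O(n)$ variable gadgets, $O(nm)$ constant-size literal hallways, and constant-ratio coordinates throughout; I would also note that no Portal Gun is used, and that the locality assumption on enemy behavior ensures destroying turrets in one literal hallway neither reactivates nor perturbs turrets elsewhere.

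The main obstacle I anticipate is the geometric routing in 3D: making the unlock corridor of $x_i$ reach behind the turrets of all of $x_i$'s occurrences while keeping that corridor path-disconnected from the corresponding clause hallways (so turrets can never be bypassed from the front), and threading the $n$ variable corridors past the $m$ clause gadgets without creating unintended shortcuts. In three dimensions this is handled without any planar crossover gadget — the subsystems are separated vertically and joined by long vertical shafts and one-way drops — so the crux of the correctness argument is checking that every such connector genuinely behaves as a one-way edge and that no two gadgets share reachable space. Finally I would remark that this entire construction fits within the Portal~2 Puzzle Maker, and that the same template yields NP-hardness for other 3D games possessing an enemy with the three listed properties, as discussed in \iffull Section~\ref{subsec:OtherGames}\else Section~\ref{sec:PortalTurrets}\fi.
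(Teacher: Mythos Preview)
Your proposal is correct and follows essentially the same approach as the paper: reduce from $3$-\textsc{SAT} using the literal, variable, and clause gadgets described in the section, with the one-way long falls enforcing a committed assignment and the behind-the-turret unlock corridors encoding literal truth. Your write-up is in fact considerably more explicit than the paper's own proof, which is a one-line appeal to the gadget constructions; the two-phase structure, the two directions of the equivalence, the polynomial-size accounting, and the remarks on 3D routing and Puzzle Maker realizability are all consistent with (and expand upon) what the paper does.
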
}
\later{
\begin{proof}
Given an instance of a 3SAT problem, we can translate it into a Portal with Turrets map using the above gadgets. This map is solvable if and only if the corresponding 3SAT problem is solvable.
\end{proof}

It is tempting to claim NP-completeness because disabling the turrets need only be performed once per turret and thus seems to have a monotonically changing state. However, the turrets themselves are physical objects that can be picked up and moved around. Their relocation add an exponential amount of state to the level. Further, if they can be jumped on top of or used to block the player in a constrained hallway, they may conceivably cause the level to be PSPACE-complete in the same way boxes can add significant complexity to a game.
}

\iffull
\subsection{Application to Other Games}
\label{subsec:OtherGames}\fi
While the framework we have presented is shown using the gameplay elements of Portal, similar elements to those we have used show up in other video games. Hence, our framework can be generalised to show hardness of other games. In this section we note several common features of games which would allow for an equivalent to the turret ``guarding unit'' in Portal. We list examples of notable games which fit the criteria. We give ideas how to use our framework to prove hardness results for these games, but it is important to note that game-specific implementation details will need to be taken into account for any hardness proof.

The first examples are games that include player controlled weapons with fixed positions, such as stationary turrets or gun emplacements. The immovable turrets should be placed at the unlock points of the literal gadget, so that they only allow the player to shoot the one desired blocking unit. Examples in contemporary video games include the Emplacement Gun in Half-Life 2, the Type-26 ASG in Half-Life, and the Anti-Infantry Stationary Guns in Halo 1 through 4.

Another set of examples are games which include a pair of ranged weapons, where one is more powerful than the other, but has shorter range. In place of the turrets in the Portal literal gadgets, we place an enemy unit equipped with the short range weapon, and give the player avatar the long range weapon. We place the blocked region such that it is in range and line of sight of the player while standing in the unlock region of the literal gadget. Additionally, we place the player such that they are not in range of the enemy's weapon. Thus the player can kill the enemy from the unlock area.
Suppose further that the blocked region is built in such a way that the player can only pass through it by moving within range of the enemy. One way of doing this would be to build it with tight turns. The result would be an equivalent implementation of the variable and clause gadgets from our Portal constructions. Note that a special case involves melee enemies. This construction applies to Doom, the Elder Scrolls III--V, Fallout 3 and 4, Grand Theft Auto 3--5, Left 4 Dead 1 and 2, the Mass Effect series, the Deus Ex series, the Metal Gear Solid series, the Resident Evil series, and many others. \iffull The complementary case occurs when the player has the short ranged, but more powerful weapon and the enemy has the weaker, long ranged weapon. Here the unlock region provides close proximity to the enemy unit but the locked region involves a significant region within line of sight and range of the enemy but is outside of the player's weapon's range. Although most games where this construction is applicable will also fall into the prior case, examples exist where the player has limited attacks, such as in the Spyro series.\fi

\iffull
A third case is where the environment impacts the effectiveness of attacks. For example, certain barriers might block projectile weapons but not magic spells. Skills that can shoot above or around barriers like this show up with Thunderstorm in Diablo II, Firestorm in Guild Wars, and Psi-storm in StarCraft. Another common effect is a location based bonus, for example the elevated-ground bonus in XCOM. Unfortunately these games lack a long-fall, and thus require the construction of a one-way gadget if one wishes to prove hardness.

While we have so far only covered NP-hardness, we conjecture that these games are significantly harder. 
Assuming simple AI and perfect information, many are likely PSPACE-complete; however, when all of the details are taken into consideration, EXPTIME or NEXPTIME seem more likely. Proving such results will require development of more sophisticated mathematical machinery.
\fi

\ifabstract\later{
We would like to make some additional comments on generalizing this theorem. Here long falls serve only in the construction of one-way gadgets, and can of course be replaced by some equivalent game mechanic. Additionally, a 2D game with these elements and an appropriate crossover gadget should also be NP-hard. 

In many games one must also be careful about ammo and any damage the player may incur while interacting with the gadget, because this will scale with the number of literals. For most of the games mentioned this is not an issue because they either 1) have items or locations to restore health or 2) have health restore after a fixed time outside of combat.

There are also some less likely, but still potentially useful combinations of mechanics that can be used to fulfill the criteria for constructing literals. First, suppose the player has a short-ranged but more-powerful weapon. This case looks just like the case where an enemy has the short ranged weapon. The unlock region provides close proximity to the enemy unit but the locked region involves a significant region within line of sight and range of the enemy but is outside of the player's weapon's range. Although most games where this is applicable will also fall into the prior case, examples exist where the player has limited attacks, such as in the Spyro series.

Another case is where the environment effect impacts the effectiveness of attacks. For example, certain barriers might block projectile weapons but not magic spells. Skills that can shoot above or around barriers like this show up with Thunderstorm in Diablo II, firestorm in Guild Wars, and psi-storm in StarCraft. Another common effect is a location-based bonus, for example the elevated-ground bonus in XCOM. Unfortunately these games lack a long-fall, and thus require the construction of a one-way gadget if one wishes to prove hardness.

While we have so far only covered NP-hardness, we conjecture that these games are significantly harder. 
Assuming simple AI and perfect information, many are likely PSPACE-complete; however, when all of the details are taken into consideration EXP or NEXPTIME seem more likely. Proving such results will require development of more sophisticated machinery.
}
\fi

\section{Portal with Timed Door Buttons is NP-hard}
\label{sec:PortalTimed}
\ifabstract\later{
\section{Portal with Timed Door Buttons is NP-hard}
\label{appen:PortalTimed}}
\fi

We provide a new metatheorem related to Forisek's Metatheorem 2~\cite{Forisek10} and Viglietta's Metatheorem 1~\cite{HardGames12}.
\begin{metatheorem}
\label{thm:timed-thm}
  A platform game with doors controlled by timed switches is NP-hard.
\end{metatheorem}
\begin{proof}
\label{pf:timed-proof}
We will prove hardness by reducing from finding Hamiltonian cycles in grid graphs~\cite{GridHamPath}. Every vertex of the graph will be represented by a room with a timed switch in the middle. These rooms will be laid out in a grid with hallways in-between. The rooms are small in comparison to the hallways. In particular, the time it takes to press a timed button and travel across a room is $\delta$ and the time it takes to traverse a hallway is $\alpha > n\cdot\delta$ where $n$ is the number of nodes in the graph. This property ensures the error from turning versus going straight through a room won't matter in comparison to traveling from node to node. All of the timed switches will be connected to a series of closed doors blocking the exit hallway connected to the start node. The timers will be set, such that the doors will close again after $(\alpha + \delta) \cdot (t + 1) + \epsilon$ where $\epsilon$ is the time it takes to move from the switch at the start node through the open doors to the exit. The exit is thus only reachable if all of the timed switches are simultaneously active. Because we can make $\alpha$ much larger than $\epsilon$, we can ensure that there is only time to visit every switch exactly once and then pass through before any of the doors revert.
\end{proof}
\begin{corollary}
\label{cor:timeddoor}
A Portal level with only timed door buttons is NP-hard.
\end{corollary}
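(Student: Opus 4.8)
The plan is to obtain this as an essentially immediate consequence of Metatheorem~\ref{thm:timed-thm}. That metatheorem already reduces Hamiltonicity of grid graphs to any platform game equipped with doors controlled by timed switches, so all that remains is to observe that \textsc{Portal}, restricted to its timed door buttons and the doors they control (together with the plain walking/jumping navigation that every Portal level supports), is an instance of the abstract ``platform game with timed switches'' hypothesized there. Concretely, I would note that Portal's \emph{timed button} is exactly a timed switch, that literal doors in Portal play the role of the required doors, and that the grid of small rooms joined by long hallways from the metatheorem's proof is realized simply as polygonal floor geometry in a flat region of a 3D map. No long fall, no Portal Gun, and no other gadget is used, matching the corresponding entry in Table~\ref{PortalResultsTable}.

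Spelling out the steps: (1) take a grid graph $G$ on $n$ vertices as the \textsc{Grid Hamiltonicity} instance; (2) place one room containing a single timed button per vertex of $G$, with hallways between rooms following the edges of $G$, and make the rooms tiny relative to the hallways, so the room-crossing time $\delta$ is dominated by the hallway-traversal time $\alpha > n\cdot\delta$; (3) wire every timed button to a cascade of closed doors sealing the exit hallway at the start room, with the timers set so that the exit is passable only during the brief window after the last button is pressed, forcing any successful traversal to press all $n$ buttons quickly enough that it visits each room exactly once, i.e.\ traces a Hamiltonian cycle; (4) set the \textsc{Portal} instance's start and end coordinates to the start room and the far side of the exit doors. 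Solvability of the resulting level is then equivalent to the existence of a Hamiltonian cycle in $G$.

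The only real point requiring care is checking that the timing inequalities survive Portal's discretization conventions: positions, velocities, and time are fixed-point numbers, and turning inside a room versus running straight through introduces a small additive discrepancy in travel time. I would dispatch this exactly as in the metatheorem's proof, by observing that $\delta$ can be made negligible relative to $\alpha$, and $\alpha$ in turn large relative to the slack $\epsilon$ needed to clear the open doors, so all the relevant thresholds can be chosen as fixed-point numbers with polynomially many bits; in fact this reduction is coarse enough to be built even under the voxel-based Puzzle Maker restrictions. Hence the reduction runs in polynomial time and is correct, establishing that \textsc{Portal} with only timed door buttons is NP-hard.
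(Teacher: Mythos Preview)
Your proposal is correct and follows essentially the same approach as the paper: the corollary is derived immediately from Metatheorem~\ref{thm:timed-thm} by observing that Portal's timed buttons and doors instantiate the abstract timed-switch/door mechanics required there, with the grid-graph Hamiltonicity reduction carried over verbatim. One small aside to correct: your claim that the construction works ``even under the voxel-based Puzzle Maker restrictions'' is not quite right---the paper points out that the Puzzle Maker does not allow specifying timer lengths (nor placing additional doors, though collapsible stairs can substitute), so the full reduction requires the Hammer editor; this does not affect the NP-hardness argument itself.
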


\later{
A screenshot of an example map for Corollary~\ref{cor:timeddoor} is given in Figure~\ref{fig:HampathScreenshotsMap}. Because the Portal 2 Workshop does not allow additional doors, the example uses collapsible stairs, as seen in Figure~\ref{fig:HampathScreenshotsVerify} for the verification gadget instead. We note that anything which will prevent the player from passing unless currently activated by a timed button will suffice. Moving platforms and Laser Fields are other examples. Unfortunately, the Puzzle Maker does not allow the timer length to be specified, which is a needed generalization for the reduction and available in the Hammer editor.

\begin{figure}[!ht]
  \centering
    \includegraphics[width=0.8\textwidth]{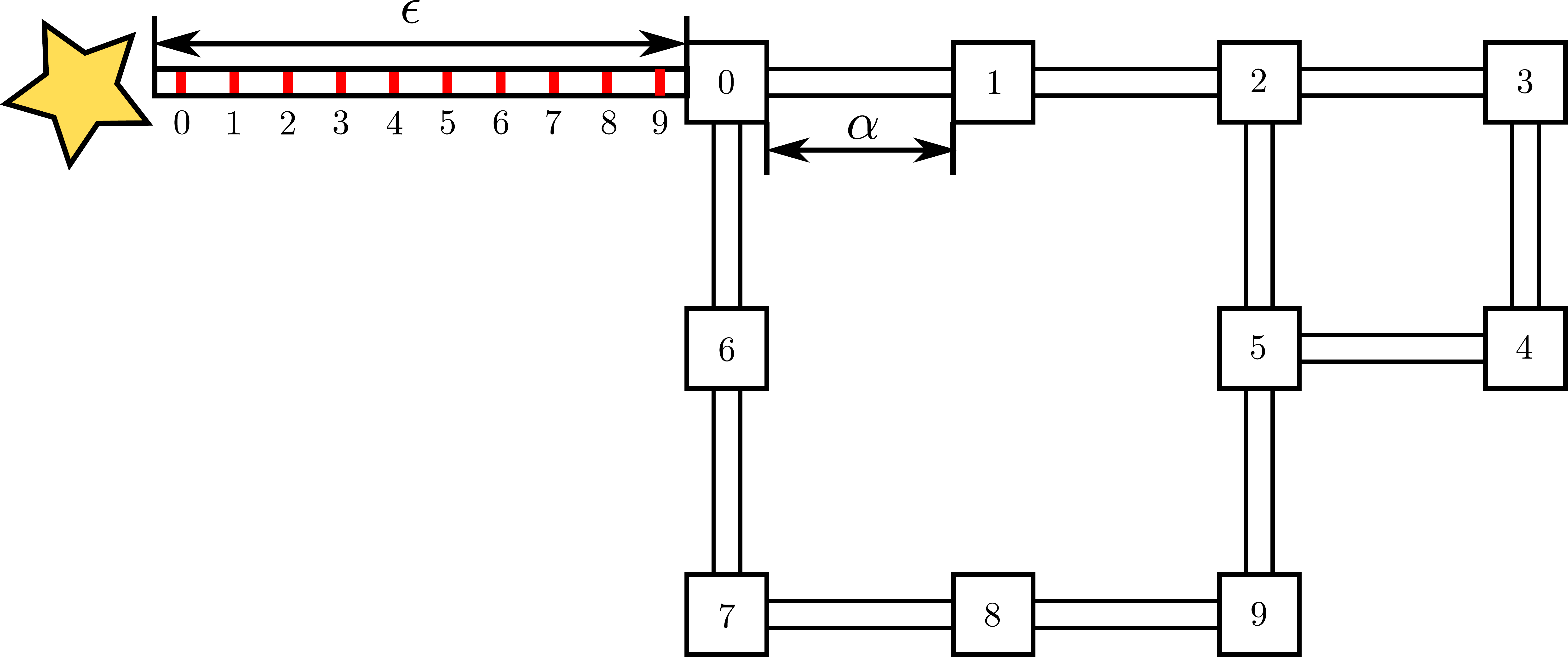}
    \caption{An example of a map forcing the player to find a Hamiltonian cycle in a grid graph.}
  \label{fig:HampathScreenshotsMap}
\end{figure}

\begin{figure}[!ht]
  \centering
    \includegraphics[width=0.8\textwidth]{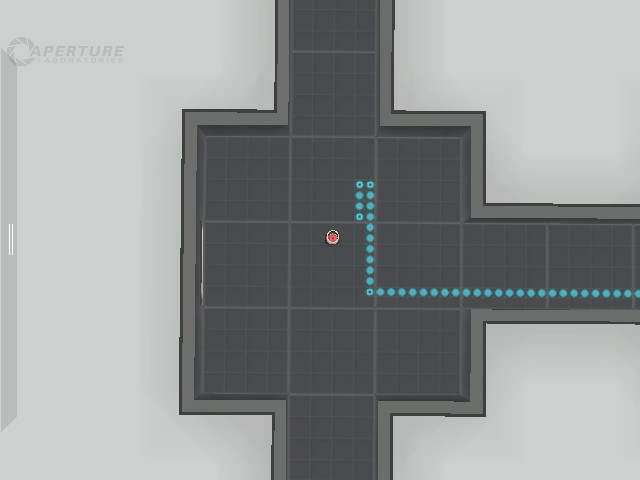}
    \caption{Close-up of a node in the grid graph.}
  \label{HampathScreenshotsNode}
\end{figure}

\begin{figure}[!ht]
  \centering
    \includegraphics[width=0.8\textwidth]{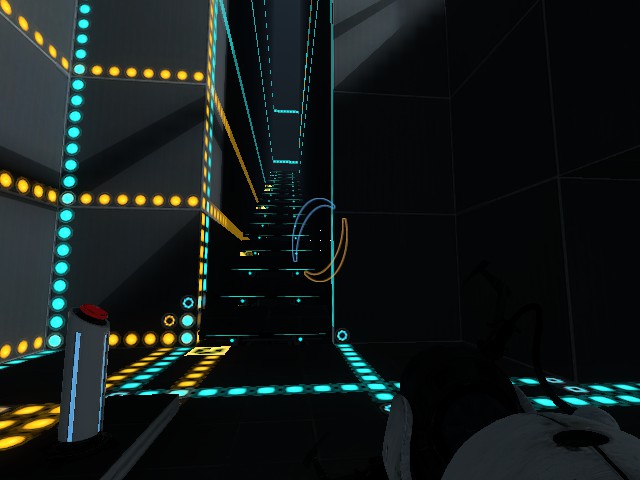}
    \caption{A screenshot of the verification gadget, partially satisfied.}
  \label{fig:HampathScreenshotsVerify}
\end{figure}
}

\section{Portal with High Energy Pellets and Portals is NP-hard}
\label{sec:PortalHEP}

\ifabstract\later{
\section{Portal with High Energy Pellets and Portals is NP-hard}
\label{appen:PortalHEP}}
\fi

In Portal, the High Energy Pellet, HEP, is an object which moves in a straight line until it encounters another object. HEPs move faster than the player avatar and if they collide with the player avatar, the avatar is killed. If a HEP encounters another wall or object, it will bounce off of that object with equal angle of incidence and reflection. In Portal, some HEPs have a finite lifespan, which is reset when the HEP passes through a portal, and others have an unbounded lifespan. \iffull A HEP launcher emits a HEP normal to the surface it is placed upon. These are launched when the HEP launcher is activated or when the previous HEP emitted has been destroyed.\fi A HEP catcher is another device that is activated if it is ever hit by a HEP. When activated this device can activate other objects, such as doors or moving platforms. HEP's are only seen in the first Portal game and are not present in the Portal 2 Puzzle Maker.

\begin{theorem}
\textsc{Portal} with Portals, High Energy Pellets, HEP launchers, HEP catchers, and doors controlled by HEP catchers is NP-hard.
\end{theorem}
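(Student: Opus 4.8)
The plan is to reduce from $3$-\textsc{SAT}, reusing the variable/clause/wiring architecture of the turret construction (Section~\ref{sec:PortalTurrets}) but replacing the turret ``guarding unit'' with a door that stays shut until a dedicated HEP catcher is triggered (recall that we allow doors to block HEPs). Globally the level is one long passage from the start to the end location: first a chain of $n$ \emph{variable gadgets}, then the $m$ \emph{clause gadgets} in series. Each occurrence of a literal $\ell$ is a door in its clause's gadget, and all doors for occurrences of $\ell$ are wired to a single catcher $C_\ell$; a clause gadget is three parallel hallways, the $j$-th gated by the door of its $j$-th literal, so it is traversable iff at least one literal of the clause has been made true --- exactly as in the turret clause gadget. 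Since Portal levels are genuinely three-dimensional, HEP pipes and catcher-to-door linkages can be routed around one another, so no crossover gadget is needed.

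The heart of the proof is the variable gadget, which must force the player to make exactly one \emph{irrevocable} choice between ``$x_i$ true'' and ``$x_i$ false.'' The gadget contains a HEP launcher firing into a sealed pipe; left alone the pellet merely bounces and no catcher fires, so the exit door (opened by either choice-catcher) never opens and the player is stuck. To proceed, the player shoots a portal through a window into the pipe and redirects the pellet, and the geometry offers exactly two catchable targets: $C_{x_i}$, which opens the positive-occurrence doors, and $C_{\lnot x_i}$, which opens the negative-occurrence doors. The delicate point --- and what I expect to be the main obstacle --- is showing the player cannot cause \emph{both} of a variable's catchers to fire, since that would make every clause trivially satisfiable. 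I would arrange that (i) at most one pellet is ever in flight in the pipe (which follows from the launcher re-firing only when its previous pellet is destroyed), and (ii) the door that each choice-catcher opens is placed so as to geometrically seal off the pellet's route to the \emph{other} choice-catcher; then, once any pellet is caught, every re-launched pellet is confined to a trajectory ending in that same already-latched catcher (or a harmless absorber), and the opposite choice can never be registered. One must also check that the player's walking path never enters a pellet's flight envelope --- easily arranged by keeping all pellet trajectories inside the pipes --- so the lethality of HEPs does not interfere with traversal.

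Finally I would verify correctness. Forward: from a satisfying assignment the player makes the corresponding choice in each variable gadget, opening in every clause the door of some true literal, so all clause hallways are passable and the end is reached. Backward: a successful traversal must have opened each variable gadget's exit, hence fired exactly one of $\{C_{x_i}, C_{\lnot x_i}\}$ per $i$ by the argument above; this yields a consistent assignment, and it opens at least one literal door per clause, so it satisfies the formula. The construction has size polynomial in $n+m$ with polynomial-precision coordinates, giving a polynomial-time many-one reduction, so \textsc{Portal} with Portals, High Energy Pellets, HEP launchers, HEP catchers, and HEP-controlled doors is NP-hard. (Note that HEP launchers do not exist in the Portal 2 Puzzle Maker, so this reduction is specific to the Hammer/VMF level format of the first game.)
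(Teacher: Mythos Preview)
Your proposal is a valid reduction, but it takes a genuinely different route from the paper's. The paper does \emph{not} reuse the 3-\textsc{SAT} turret architecture; instead it reduces from Hamiltonian cycle in grid graphs, exactly as in Metatheorem~\ref{thm:timed-thm}. Each grid vertex becomes a room with a HEP launcher and a HEP catcher, each facing a portalable surface; the player must portal the pellet into the catcher to open one of a bank of exit doors. A separate, inaccessible launcher--catcher pair implements a global timer that eventually releases a lethal HEP into a final corridor, so the player can only escape by activating every catcher within a tight time bound, forcing a Hamiltonian tour. There is no per-gadget ``irrevocable choice'' at all: the hardness comes entirely from the timing constraint.

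The tradeoffs are clear. The paper's approach sidesteps precisely the obstacle you flag as delicate---preventing both catchers in a variable gadget from ever firing---at the cost of needing a timing argument (room-crossing time $\delta$, hallway time $\alpha > n\delta$, and a carefully tuned timer). Your approach avoids any timing analysis, but it leans on a geometric lockout mechanism whose correctness depends on details you only sketch: in particular, you must ensure that after one choice-catcher fires, no placement of portals on any still-visible portalable surface can route a freshly launched pellet to the other catcher. Since the player can re-shoot portals at will and the launcher re-fires after every catch, this really does require that triggering $C_{x_i}$ \emph{closes} a door (or equivalently moves a barrier) that severs either the player's line of sight to the relevant portalable surface or the pellet's path from that surface to $C_{\lnot x_i}$; merely ``opening'' a door will not seal anything. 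This is achievable---HEP catchers can be wired to close doors as well as open them---but you should state it explicitly. With that clarification your argument goes through; it is simply a different, somewhat more intricate, reduction than the one the authors chose.
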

\begin{proof}
We will reduce from finding Hamiltonian cycles in grid graphs~\cite{GridHamPath}; refer to Figure~\ref{fig:doorsHamiltonianHEP}. For this construction, we will need a gadget to ensure the avatar traverses every represented node, as well as a timing element. Each node in the graph will be represented by a room that contains a HEP launcher and a HEP catcher. They are positioned near the ceiling, each facing a portalable surface. The HEP catcher is connected to a closed door preventing the avatar from reaching the exit. Proof of \iffull Metatheorem~\ref{thm:timed-thm} uses the same idea and has an example of how rooms in Portal can be connected to simulate a grid graph. The rooms are small in comparison to the hallways. In particular, the time it takes to shoot a portal, wait for it to enter the HEP Catcher, and travel across a room is $\delta$ and the time it takes to traverse a hallway is $\alpha > n\cdot\delta$ where $n$ is the number of nodes in the graph. This property ensures the error from turning versus going straight through a room won't matter in comparison to traveling from node to node.\fi

\begin{figure}[t]
  \centering
    \includegraphics[width=0.72\textwidth]{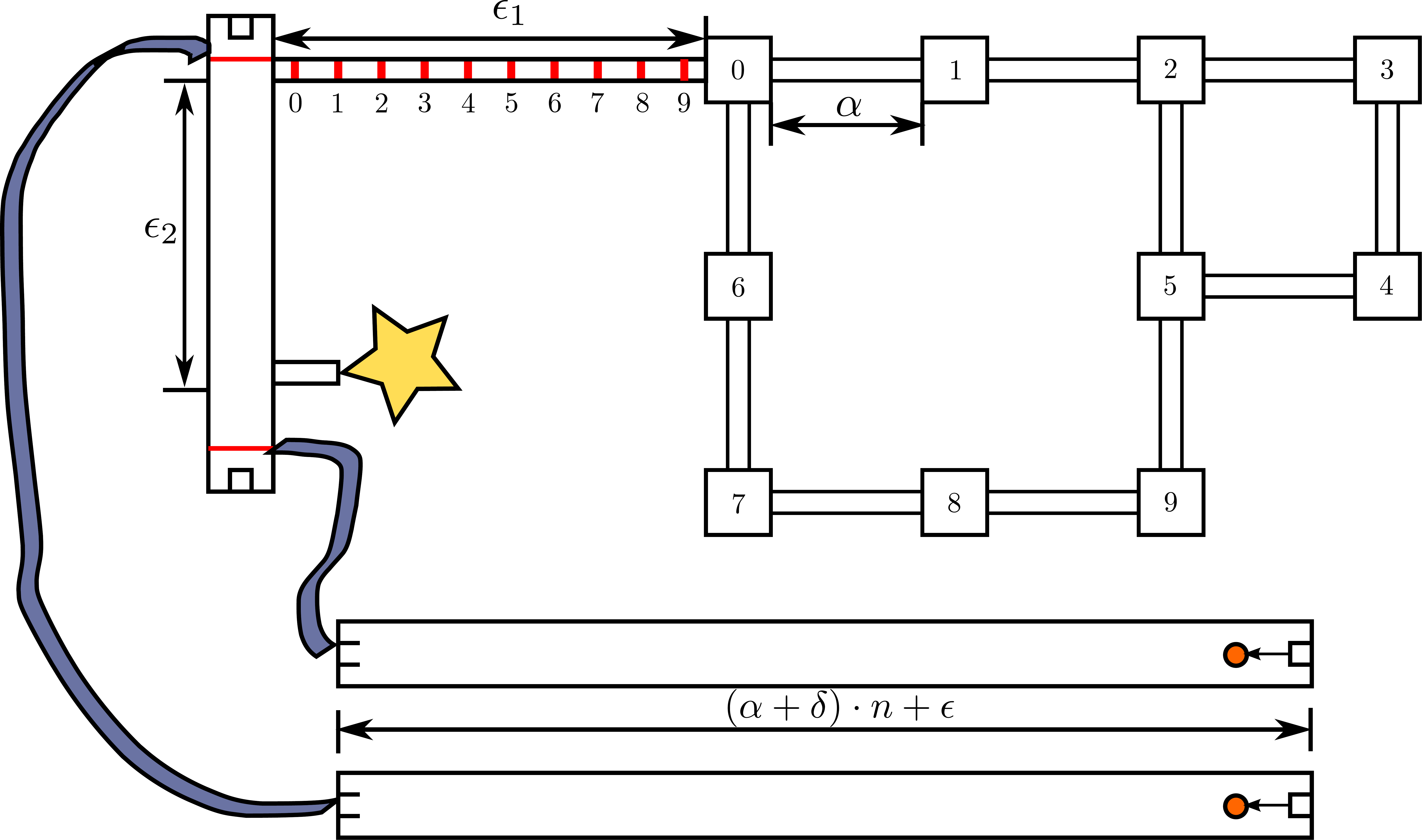}
    \caption{An example level for the HEP reduction. Not drawn to scale.}
    \label{fig:doorsHamiltonianHEP}
\end{figure}

The timer will contain two elements. First, we will arrange for a hallway with two exits and a HEP launcher behind a door on one end. The hallway is long enough so it is impossible for the avatar to traverse the hallway when the door is open. Call this component the \emph{time verifier}. In another area, we have a HEP launcher and a HEP catcher on opposite ends of a hallway that is inaccessible to the avatar. The catcher in this section will open the door in the time verifier. This construction ensures that the player can only pass through the time verifier if they enter it before a certain point after starting. To complete the proof, we set the timer equal to $(\alpha+\delta)\cdot n+ \epsilon_1+\epsilon_2$ where $\epsilon_1$ is the minimum time needed for the avatar to traverse the hallway with doors, $\epsilon_2$ is the minimum time needed for the avatar to traverse the time verifier, $\alpha$ is the minimum time it takes for the player to move to an adjacent room and change the trajectory of the HEP, and $n + 1$ is the number of HEP catchers in the level. Thus concludes our reduction from the Hamiltonian cycle problem in grid graphs.
\end{proof}

The HEP Catchers are only able to be activated once, so one may be tempted to claim this problem is in NP. This is not necessarily the case because navigating around HEP particles with more complicated trajectories might require long paths or wait times. The PSPACE-hardness of motion planning with periodic obstacles\cite{sutner1988motion} suggests the natural class for this problem is actually PSPACE-complete.

\section{Portal is PSPACE-complete}
\label{sec:pspace}

In this section we give a new metatheorem for games with doors and switches, in the same vein as the metatheorems in \cite{Forisek10}, \cite{HardGames12}, and \cite{2Button2015}. We use this metatheorem to give proofs of PSPACE-completeness of Portal with various game elements. All of the gadgets in this section can be created in the Portal 2 Puzzle Maker.

The proofs in this section revolve around constructing game mechanics which implement a switch:
the construction can be in one of two states, and the state is controllable by the player. When the avatar is near the switch, it can be freely set to either state. Each state has a set of doors which are open when the switch is in that state. A switch is very similar to a button in that it controls whether doors are open or closed, and the player has the option of interacting with it. The key difference is that buttons can be pressed multiple times to open or close its associated doors, and cannot necessarily be `unpressed' to undo the action. We show that a game with switches and doors is PSPACE-complete, using similar techniques to \cite{2Button2015}.

In what follows we will use the nondeterministic constraint logic framework\cite{GPCBook09}, wherein the state of a nondeterministic machine is encoded by a graph called a \emph{constraint graph}. The state is updated by changing the orientation of the edges in such a way that constraints stored on the vertices are satisfied.

Formally, an constraint graph is an undirected simple graph $G=(V,E)$ with an assignment of nonnegative integers to the edges $w:E\rightarrow \mathbb{Z}^+$, referred to as \emph{weights}, and an assignment of integers to the vertices $c:V\rightarrow \mathbb{Z}$, referred to as \emph{constraints}. Each edge has an orientation $p:E\rightarrow \{+1,-1\}$. A constraint graph is fully specified by the tuple $\mathcal{G}=(G,w,c,p)$. The edge orientation $p$ induces a directed graph $D_{G,p}$.  
Let $v\in V$ be a vertex of $G$. Its \emph{in-neighbourhood} 
\begin{align*}
N^-(v,p)=\{w~|~(v,w)\in A\}
\end{align*}
is the set of vertices of $D_{G,p}=(V,A)$ with an arc oriented towards it.
The constraint graph $\mathcal{G}$ is \emph{valid} if, for all $y\in V$, $\sum_{x\in N^-(y,p)} w((x,y)) \ge c(x)$.
The state of a constraint graph can be changed by selecting an edge and multiplying its orientation by $-1$, such that the resulting constraint graph is valid. We say that we have \emph{flipped} the edge.

A vertex $v$ in a constraint graph with three incident edges $x,y,o$ can implement an AND gate by setting $c(v)=2$, $w(x)=w(y)=1$, and $w(o)=2$. Clearly, the edge $o$ can only point away from $v$ if both $x$ and $y$ are pointing towards $v$. In a similar fashion, we can implement an OR gate by setting $w(v)=2$, $w(x)=w(y)=w(o)=2$. A constraint graph where all vertices are AND or OR vertices is called an \emph{AND/OR constraint graph}. The following decision problem about constraint graphs is PSPACE-complete.

\begin{problem}
\textsc{Nondeterministic Constraint Logic}

\textit{Input}: An AND/OR constraint logic graph $\mathcal{G}=((V,E),w,c,p)$, and a target edge ${i,j}\in E$.

\textit{Output}: Whether there exists a constraint graph $\mathcal{G}'=((V,E),w,c,p')$ such that $p'(\{i,j\})=-p(\{i,j\})$, and which can be obtained from $\mathcal{G}$ by a sequence of valid edge flips.
\end{problem}
\begin{metatheorem}
\label{thm:switches}
Games with doors that can be controlled by a single switch and switches that can control at least six doors are PSPACE-complete.
\end{metatheorem}
\begin{proof}
We prove this by reduction from \textsc{Nondeterministic Constraint Logic}. The edges of the consistency graph are represented by a single switch whose state represents the edge orientation. Connected to each switch is a \emph{consistency check gadget}. This gadget consists of a series of hallways that checks that the state of the two vertices adjacent to the simulated edge are in a valid configuration and thus that the update made to the graph was valid. Each edge switch is connected to doors in up to six consistency checks, two for itself and four for the adjacent edges. For an AND vertex, the weight two edge is given by the door with the single hallway, and the weight one edges connect to the two doors in the other hallway. For an OR vertex we have a hallway that splits in three, each with one node. An example is given in Figure~\ref{img:SwitchGadget}. Each switch thus connects to five doors. All of the edge gadgets, with their constraint checks, are connected together. This construction allows the player to change the direction of any edge they choose. However, to get back to the main hallway connecting the gadgets, the graph must be left in a valid state. Off the main hallway there is be a final exit connected to the target location, but blocked by a door connected to the target edge. If the player is able to flip the edge by visiting the edge gadget, moving the cube to the button which opens the exit door, and return through the graph consistency check, then the avatar can reach the target location.
\end{proof}

\begin{figure}[h]
    \centering
    \begin{subfigure}[b]{0.45\textwidth}
        \centering
        \includegraphics[width = \textwidth]{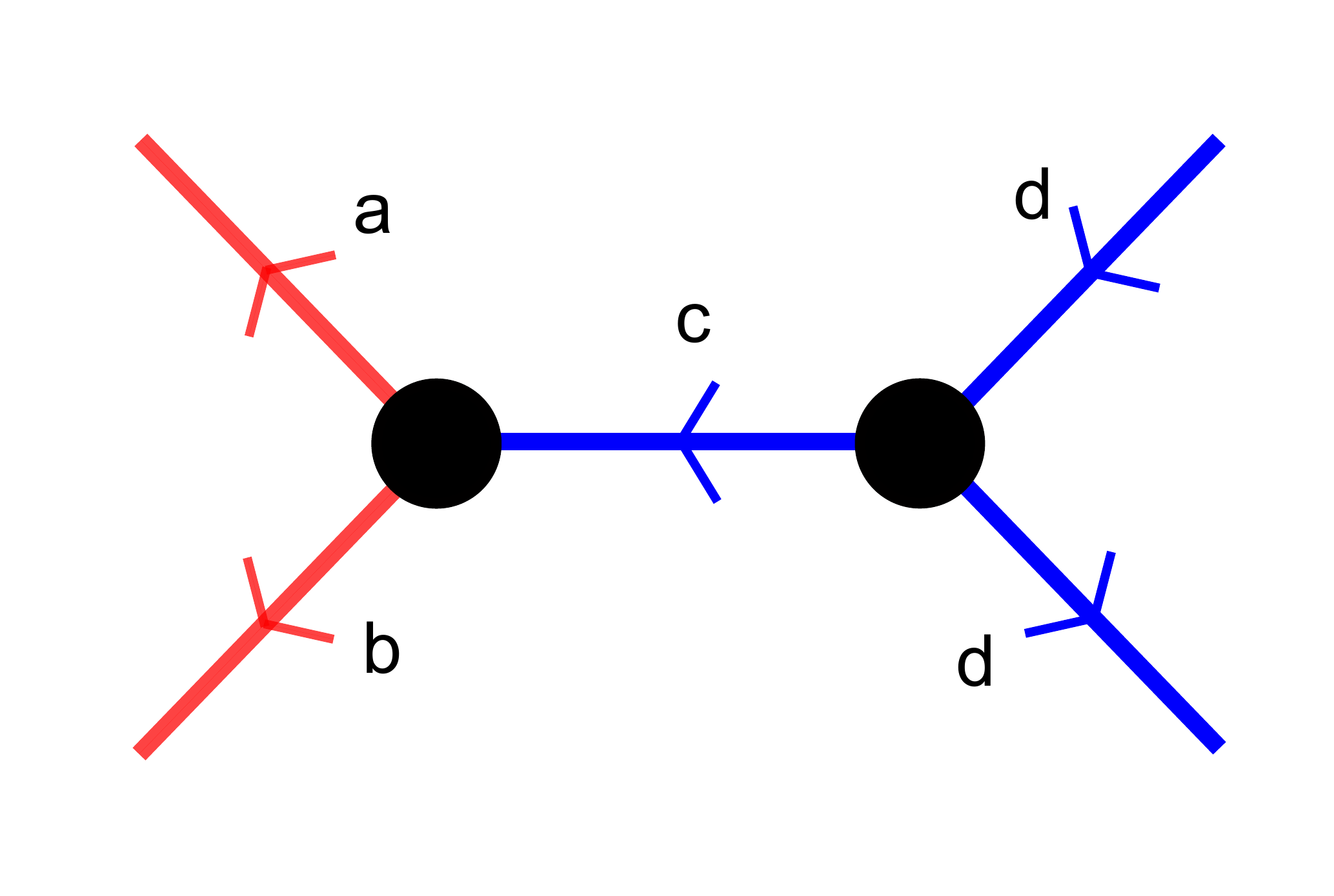}
        \caption{Section of a constraint logic graph being simulated. Blue edges are weight 2 and red edges are weight 1.}
    \end{subfigure}
    \hfill
    \begin{subfigure}[b]{0.45\textwidth}
        \centering
        \includegraphics[width = \textwidth]{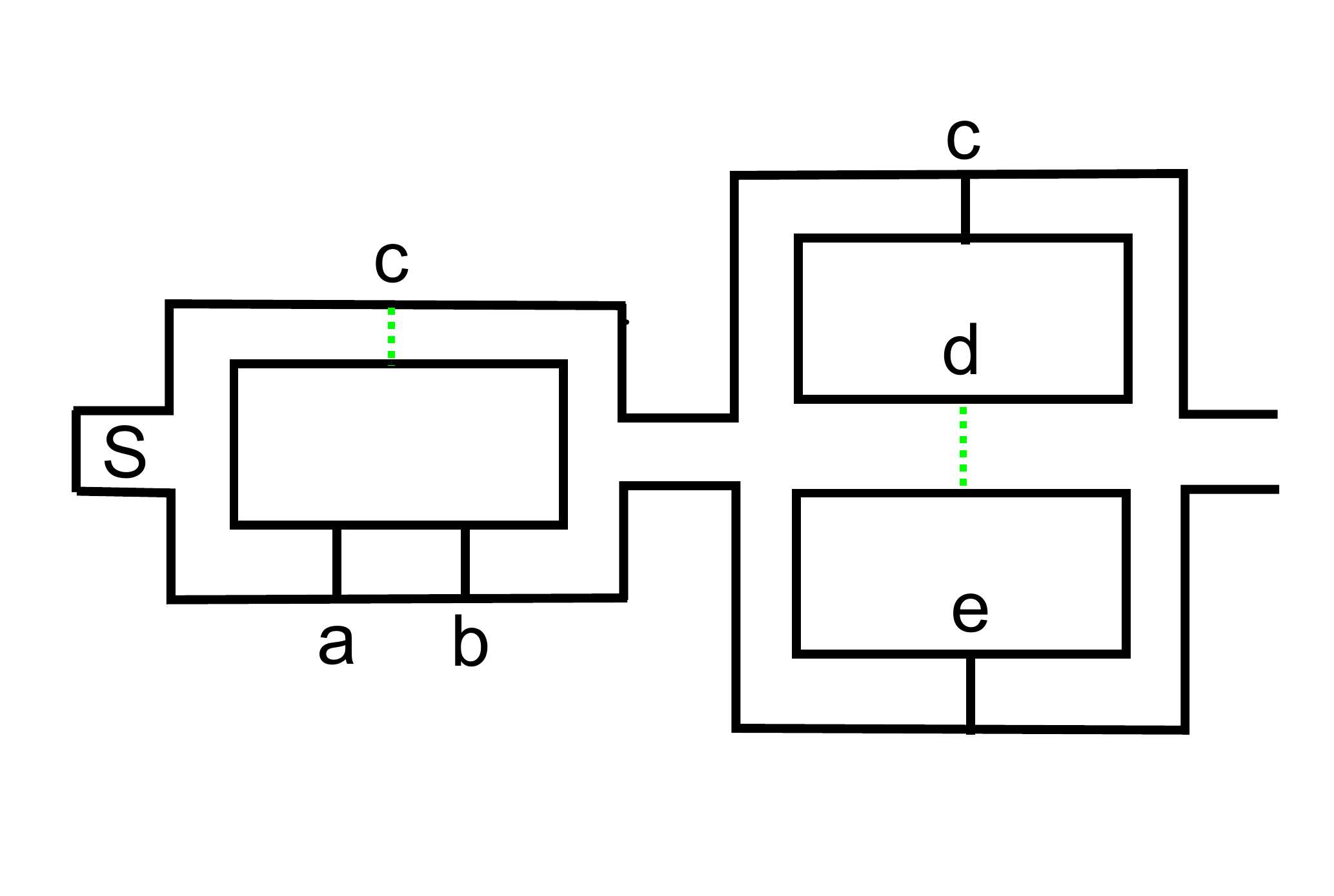}
        \caption{Gadget simulating edge $c$ in the constraint logic graph. Green dotted lines are open doors.}
    \end{subfigure}
    \caption{Example of an edge gadget built from switches and doors.}
\label{img:SwitchGadget}
\end{figure}

\begin{theorem}
\label{thm:pspace}
 \textsc{Portal} with any subset of long falls, portals, Weighted Storage Cubes, doors, Heavy Duty Super Buttons, lasers, laser relays, gravity beams, turrets, timed buttons, and moving platforms is in PSPACE.
\end{theorem}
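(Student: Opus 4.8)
The plan is to show membership in PSPACE directly, by arguing that (i) a complete game state can be written down in polynomial space, (ii) the one-step transition function is computable in polynomial time given a state and a user input, and (iii) the reachability question defining \textsc{Portal} is therefore solvable in nondeterministic polynomial space, which equals PSPACE by Savitch's theorem. Unlike the pseudopolynomial algorithm of Theorem~\ref{thm:pseudopoly}, here we cannot afford to build the state-space graph explicitly: with polynomial-precision coordinates it has $2^{\mathrm{poly}(n)}$ vertices. The point is only that each vertex is a polynomial-size object and that the graph can be walked on the fly.

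First I would enumerate the components of a state and bound their sizes. The avatar contributes its position, velocity and orientation; each cube contributes a position and velocity; each turret contributes a position, orientation and an alive/destroyed bit; the Portal Gun contributes the positions and orientations of the (at most two) portals; each door, laser relay, laser catcher, Heavy Duty Super Button and gravity-beam emitter contributes a few bits of activation and direction state; each moving platform contributes its current location along its track together with a direction of travel; and each timed button contributes the time remaining before it reverts. By the encoding assumptions of Section~\ref{sec:PortalDefinitions}, every position, velocity and timer is a fixed-point number with polynomially many bits, and there are only polynomially many game elements, so the entire state is polynomial size. The subtle point to stress is that although picking up and relocating cubes and turrets, or pressing buttons in different orders, generates exponentially many reachable configurations (as noted after the turret construction), each individual configuration is still only polynomially long.

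Next I would argue that the transition is polynomial-time computable. Given the current state and the next input --- a constant-type input consisting of the pressed buttons together with a cursor position given by two fixed-point numbers --- we: integrate one time step of the physics for the avatar and any free cubes under gravity and any gravity beams, resolving collisions against walls, closed doors, platforms and one another; apply any portal traversals; if the Portal Gun is fired, cast the aiming ray (following it through any intervening portal) to the first portalable surface hit and update the portal positions; recompute each laser beam and the resulting relay/catcher activations; recompute which weighted floor buttons are depressed; update all door, platform and gravity-beam states from the button/relay/catcher states to which they are wired; decrement the timed-button timers and revert the expired ones; and finally test whether the avatar has been killed (by laser or turret fire, or by being crushed), treating death as a losing absorbing state. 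Each of these is a polynomial-time geometric or bookkeeping computation. The place that needs care is recomputing a laser beam: a beam that passes through one portal and re-enters the other can in principle cycle forever, but with at most two portals a beam is a sequence of straight segments in which each segment's successor is determined by the portal it enters, so after polynomially many segments it has either terminated at a wall or object or provably entered a period, at which point we stop. I expect this ray-casting-through-portals step --- together with pinning down that single-step collision resolution really is polynomial time and free of circular wiring dependencies (which we sidestep by treating door, relay and catcher activations as stored, one-step-delayed state, as a game engine does) --- to be the main technical obstacle; the remaining bookkeeping is routine.

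Finally, since there are at most $2^{q(n)}$ states for some polynomial $q$, any winning state reachable from the start is reachable by an input sequence of length at most $2^{q(n)}$. A nondeterministic procedure therefore guesses the inputs one at a time, storing only the current state, the level description, and a $q(n)$-bit step counter; it halts and rejects if the counter overflows, and accepts the moment the avatar occupies the end location. This uses only polynomial space, so \textsc{Portal} restricted to these elements lies in $\mathrm{NPSPACE} = \mathrm{PSPACE}$.
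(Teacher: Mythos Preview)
Your argument is correct and follows essentially the same route as the paper: bound the size of a state description by a polynomial, observe that transitions are locally computable, and conclude membership in $\mathrm{NPSPACE}=\mathrm{PSPACE}$ via Savitch's theorem. Your write-up is considerably more detailed than the paper's (enumerating each element's state, discussing laser raycasting through portals, and making the step-counter explicit), but the underlying idea is the same.
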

\begin{proof}
Portal levels do not increase in size and the walls and floors have a fixed geometry. Assuming all velocities are polynomially bounded, all gameplay elements have a polynomial amount of state which describes them. For example the position and velocity of the avatar or a HEP; whether a door is open or closed; and the time on a button timer. The number of gameplay elements remains bounded while playing. Most gameplay elements cannot be added while playing, and items like the HEP launcher and cube suppliers only produce another copy when the prior one has been destroyed. We only need a polynomial amount of space to describe the state of a game of Portal at any given point in time. Thus one can nondeterministically search the state space for any solutions to the \textsc{Portal} problem, putting it in NPSPACE. Thus by Savitch's Theorem\cite{SAVITCH1970} the problem is in PSPACE.
\end{proof}

\begin{theorem}\label{thm:cubes-pspace}
 \textsc{Portal} with Weighted Storage Cubes, doors,  and Heavy Duty Super Buttons is PSPACE-complete.
\end{theorem}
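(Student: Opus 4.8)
The plan is to combine the PSPACE upper bound with a hardness argument via the switch metatheorem. Theorem~\ref{thm:pspace} already places \textsc{Portal} with Weighted Storage Cubes, doors, and Heavy Duty Super Buttons in PSPACE, so it suffices to prove PSPACE-hardness, and for that it suffices (by Metatheorem~\ref{thm:switches}) to realize the abstract \emph{switch} gadget — a player-settable two-state device, each of whose two states opens an associated set of at least six doors — using only cubes, weighted floor buttons, and doors.

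First I would build the switch as a small room containing exactly one cube and two Heavy Duty Super Buttons $b_0$ and $b_1$, placed so the cube can rest on either button or be carried by the avatar within the room, but nowhere else. Button $b_0$ is wired to the doors that should be open in state $0$ and $b_1$ to the doors open in state $1$; since a weighted button can open an unbounded number of doors, the six-door requirement of Metatheorem~\ref{thm:switches} is met with room to spare. Depending on the initial edge orientation in the \textsc{Nondeterministic Constraint Logic} instance, the cube starts on $b_0$ or on $b_1$. To flip the switch the avatar enters the room, lifts the cube, and sets it on the other button — possible whenever the avatar is at the switch, matching the ``freely settable to either state'' requirement. Crucially, the room's single doorway is made too narrow to pass through while carrying a block (using the stated property that a carried cube does not fit through small gaps), so the cube is permanently confined to its room and cannot be destroyed or carried off, while the avatar can still enter and leave freely when empty-handed.

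Next I would verify faithfulness. With one cube and two buttons, at most one of $b_0, b_1$ is ever depressed, so the state-$0$ and state-$1$ door sets are never simultaneously open, exactly matching an idealized switch. The only deviation is a transient moment, while the cube is in transit, in which neither set is open; since this only closes doors, it can never enable a traversal forbidden by the idealized switch, hence introduces no spurious solutions, and the consistency-check architecture of Metatheorem~\ref{thm:switches} (forcing the constraint graph into a valid configuration before the avatar may return to the main hallway) still certifies correctness. Substituting these switches into the reduction of Metatheorem~\ref{thm:switches} gives PSPACE-hardness, and together with Theorem~\ref{thm:pspace} this yields PSPACE-completeness; all pieces use only voxel-scale geometry and are realizable in the Portal~2 Puzzle Maker.

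I expect the main obstacle to be the geometric bookkeeping of the switch room: simultaneously guaranteeing that the cube is trapped in the room, that the avatar can come and go arbitrarily often without it, that the two buttons are genuinely mutually exclusive and each independently drives its full door set, and that no unintended shortcut (standing on a button, stacking on the cube, jumping the narrow gap) lets the avatar bypass a closed door. None of these is deep individually, but producing a single consistent block layout that respects all of them at once — and double-checking that the transient ``all doors closed'' state cannot be exploited — is where the care lies.
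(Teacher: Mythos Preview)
Your proposal is correct and follows the same overall architecture as the paper: invoke Theorem~\ref{thm:pspace} for the upper bound, and for hardness build a two-button/one-cube ``switch room'' and plug it into Metatheorem~\ref{thm:switches}.

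The one substantive difference is how the cube is confined to its room. You use the ``carried cube does not fit through small gaps'' mechanic to make the doorway impassable while holding the cube, and then handle the resulting possibility that the avatar walks out with the cube on the floor (neither button pressed) by a monotonicity argument: the all-closed transient/idle state can only hurt the player, so it creates no spurious solutions. The paper instead places an extra door on the room's only entrance, wired to open iff at least one of the two buttons is depressed. That single mechanism simultaneously (i) traps the cube (lift it and the exit closes) and (ii) forces the switch into a genuine two-state device whenever the avatar is outside, so no monotonicity argument is needed and the gadget matches the abstract switch on the nose. Your route is perfectly valid and uses one fewer door per switch; the paper's route is a bit cleaner to verify because it avoids the third ``neither'' state entirely.
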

\begin{proof}
We will construct switches and doors out of doors, Weighted Storage Cubes,  and Heavy Duty Super Buttons. Then, we invoke Metatheorem~\ref{thm:switches} to complete the proof. A switch is constructed out of a room with a single cube and two buttons as in Figure~\ref{fig:switch}. Which of the buttons being pressed by the cube dictates the state of the switch. Each button is connected to the corresponding doors which should open when the switch is in that state. To ensure the switch is always in a valid state, we put an additional door in the only entrance to the room. This door is only open if at least one of the two buttons is depressed. Furthermore, this construction prevents the cube from being removed from the room to be used elsewhere. As long as there are no extra cubes in the level, the room must be left in exactly one of the two valid switch states for the avatar to exit the room. We now apply our doors and simulated switches as in Metatheorem~\ref{thm:switches} completing the hardness proof. Theorem~\ref{thm:pspace} implies inclusion in PSPACE.

\begin{figure}[h]
  \centering
    \includegraphics[width=0.8\textwidth]{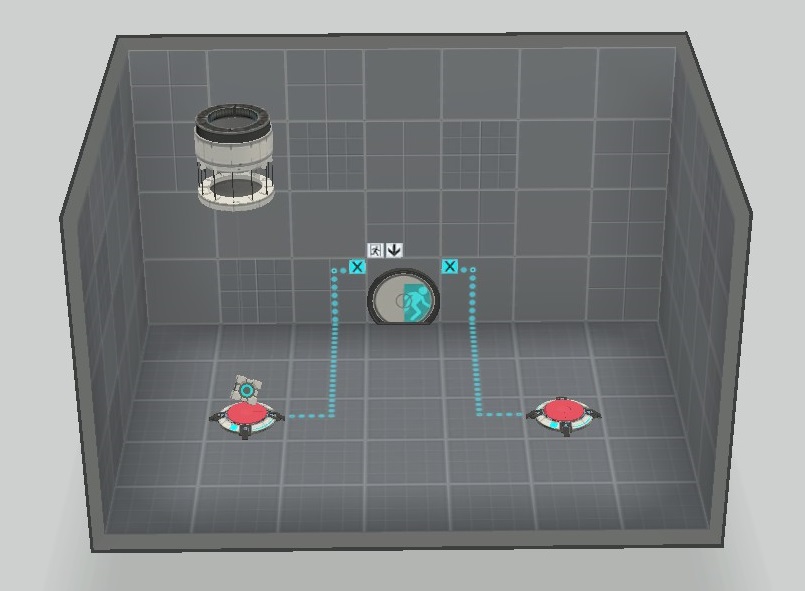}
    \caption{An example of a single switch implemented with cubes, doors, and buttons. The door will only open if at least one of the buttons is pressed.}
    \label{fig:switch}
\end{figure}
\end{proof}

\begin{theorem} \label{thm:lasers-pspace}
\textsc{Portal} with lasers, relays, portals, and moving platforms is PSPACE-complete.
\end{theorem}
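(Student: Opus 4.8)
The PSPACE upper bound is immediate from Theorem~\ref{thm:pspace}, so the work is in proving PSPACE-hardness, and as in Theorem~\ref{thm:cubes-pspace} the plan is to instantiate Metatheorem~\ref{thm:switches}: it suffices to build, out of lasers, laser relays, portals, and moving platforms, (i) a \emph{door} and (ii) a \emph{switch} --- a player-toggleable two-state device whose current state persists with the avatar absent and each of whose states opens a prescribed set of at least six doors, each door controlled by a single switch. Doors are easy: a moving platform whose inactive position fills a hallway cross-section and whose active position is flush inside the wall is exactly a door that is open precisely while the platform is activated, and a laser relay (which drives its outputs exactly while a beam passes through it) supplies the activating signal.

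For the switch I would build a cross-coupled laser latch. Use two fixed laser emitters $E_1,E_2$ aimed down two private corridors; place a relay $R_1$ on $E_1$'s beam line and a relay $R_2$ on $E_2$'s, and a moving platform $M_1$ in $E_1$'s corridor upstream of $R_1$ and a moving platform $M_2$ in $E_2$'s corridor upstream of $R_2$. Wire $R_1$ to activate $M_2$ and $R_2$ to activate $M_1$, with each platform's active position chosen to occlude the opposing emitter's beam. A short case check shows the only stable configurations are ``$R_1$ lit, $M_2$ blocking $E_2$, $R_2$ dark, $M_1$ retracted'' and its mirror, so the latch has exactly two persistent states, each held in place because the lit relay keeps the opposite emitter blocked, with no avatar required. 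The state is read out by additionally wiring $R_1$ (resp.\ $R_2$) to the up to six doors that should be open in state~$A$ (resp.\ state~$B$); the necessary fan-out comes either from the native one-to-many output connections of the Source map format or, for a purely in-game gadget, by letting the beam past $R_1$ continue through a chain of further relays before expiring on a wall.

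The avatar flips the switch with its portal gun. In each corridor I would place, reachable by line of effect only from the switch room, a small portalable patch lying on the beam line together with a second portalable patch aimed so that any redirected beam is dumped harmlessly into a non-portalable wall. Shooting a portal onto the patch in $E_1$'s corridor momentarily steals $E_1$'s beam from $R_1$, which drops $M_2$, lets $E_2$ reach $R_2$, and raises $M_1$; thereafter $E_1$ is blocked by $M_1$ itself, so the latch has settled into the complementary state and remains there even once the avatar re-fires that portal elsewhere on the way to another switch. Since the avatar can likewise steal $E_2$'s beam, it can drive the switch to either state while standing in the switch room, as Metatheorem~\ref{thm:switches} requires. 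With switches and doors in hand, the reduction from \textsc{Nondeterministic Constraint Logic} proceeds exactly as in Metatheorem~\ref{thm:switches}: one switch per edge, the consistency-check hallways gated by switch-controlled doors, and a final door on the target edge; combined with Theorem~\ref{thm:pspace} this gives PSPACE-completeness.

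The main obstacle I expect is making the switch \emph{robust} rather than merely plausible: since the avatar controls both portals and both portalable patches in a corridor, I must argue that no portal placement achievable near the switch leaves it in an ``illegal'' configuration that survives the avatar's departure (both relays lit, or both dark), and that the transient created when a portal is removed --- by being re-fired at the next switch --- always relaxes the latch into one of the two intended states rather than oscillating or flipping unexpectedly. This is a finite case analysis over the qualitatively distinct beam routings available in the switch room, backed by geometry ensuring that the only ``useful'' redirections send a beam either into a dead wall or back onto its own relay. The remaining points --- choosing corridor lengths and platform travel times (exploiting, as in Metatheorem~\ref{thm:timed-thm}, a per-room time $\delta$ dominated by a per-hallway time $\alpha > n\cdot\delta$) so the latch always settles before the avatar can leave a switch room, and noting that moving platforms, being solid polygons, occlude both the avatar and laser beams --- are routine.
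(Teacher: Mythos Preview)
Your approach is correct and follows the paper's high-level strategy: invoke Metatheorem~\ref{thm:switches} by building doors and switches from lasers, relays, portals, and moving platforms, with the PSPACE upper bound from Theorem~\ref{thm:pspace}. The gadget implementations differ in two places. For doors, the paper has a laser span the hallway (the beam itself is the barrier) and uses a relay-driven moving platform to occlude or expose that laser, whereas you use the moving platform directly as the barrier; both are fine, since the paper already notes moving platforms can serve as doors. For the bistable element, the paper uses a single \emph{self-latching} relay---a relay wired to its own occluding platform, so that once the beam reaches the relay it holds its own platform out of the way---and then layers on a second group of relays and platforms to drive the two door families. You instead build a \emph{cross-coupled} pair, each relay holding the other emitter's platform in the blocking position, i.e., an SR-style latch. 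Your construction makes the two-state analysis and the flip-by-portal mechanism especially transparent (steal one beam with a portal, let the opposite side capture), at the cost of twice the parts; the paper's single loop is more economical but pushes the ``drive two complementary door sets'' work into an extra layer. Both discharge Metatheorem~\ref{thm:switches}, and the robustness and timing caveats you flag are present in the paper's construction as well and are handled the same way.
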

\begin{proof}
We will construct doors and switches out of lasers, relays, and moving platforms allowing us to use Metatheorem~\ref{thm:switches}. In Portal 2, the avatar is not able to cross through an active laser. Because lasers can be blocked by the moving platforms game element, a door can be constructed by placing a moving platform and laser at one end of a small hallway. If the moving platform is in front of the laser, the gadget is in the unlocked state. If the moving platform is to the side, then the player cannot pass through the hallway and it is in the locked state. Moving platforms can be controlled by laser relays and will switch position based on whether the laser relay is active. Lasers can be directed to selectively activate laser relays with portals, so we have a mechanism to lock or unlock the doors.

As it stands, once a new portal is created the previously opened door will revert to its previous state. To prove PSPACE-hardness, we need to make these changes persist. To do so, we introduce a memory latch gadget, shown in Figures~\ref{fig:memory-latch0}\iffull~and \ref{fig:memory-latch1}\fi. When the relay in this gadget is activated for a sufficiently long period of time, the platform will move out of the way and the laser will keep the relay active. If the relay has been blocked for enough time, the platform moves back and blocks the laser. Thus, the state of the gadget persists.

\begin{figure}[h]
  \centering
    \includegraphics[width=0.8\textwidth]{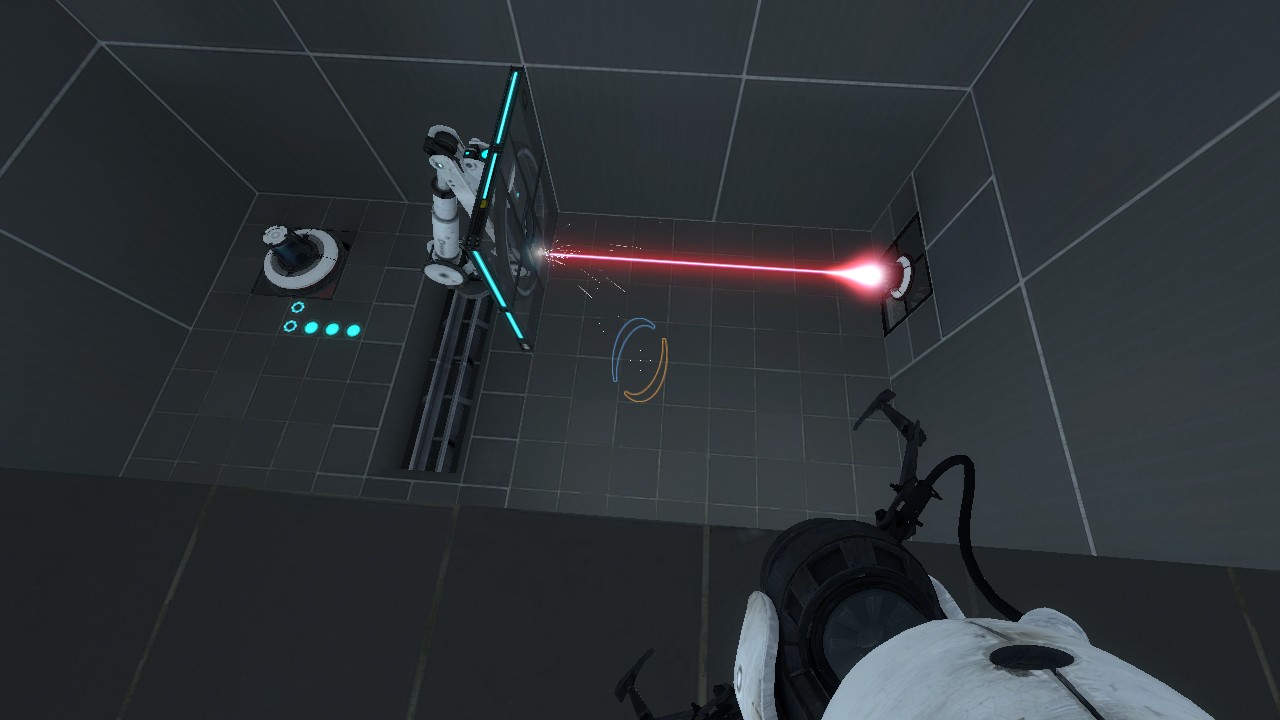}
    \caption{A memory latch in the off state.}
    \label{fig:memory-latch0}
\end{figure}

\begin{figure}[h]
  \centering
    \includegraphics[width=0.8\textwidth]{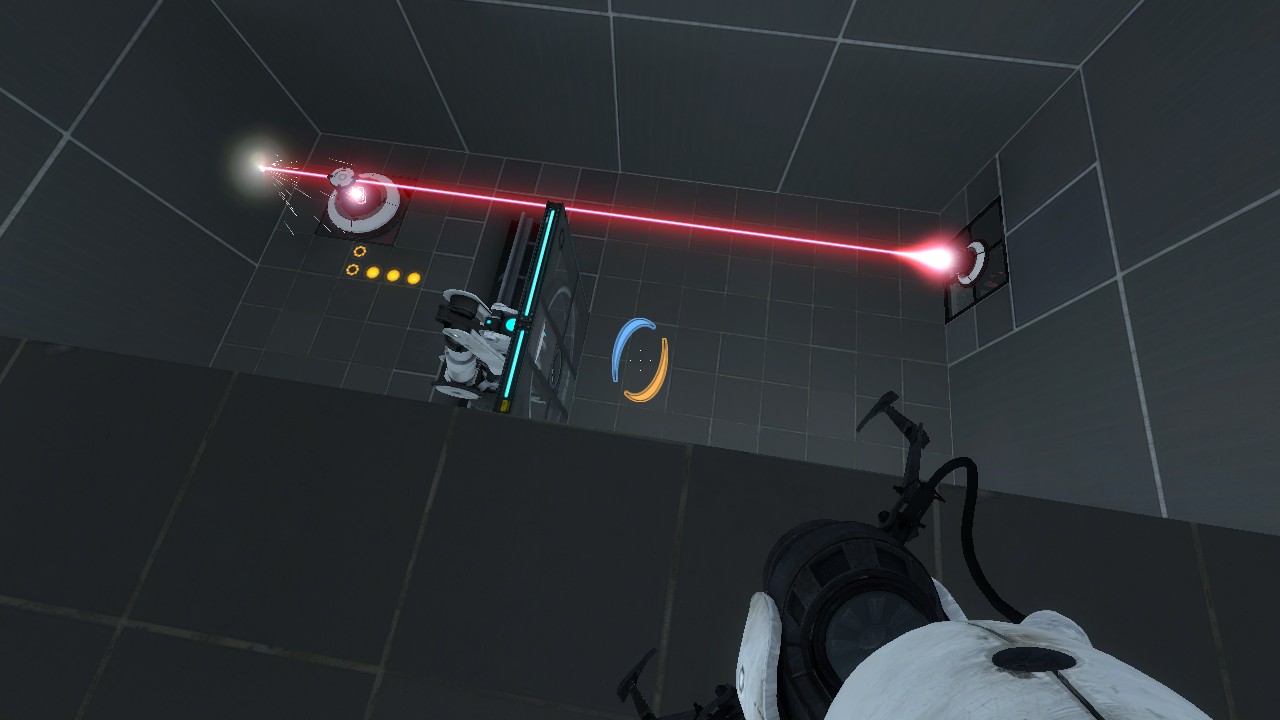}
    \caption{A memory latch in the on state.}
    \label{fig:memory-latch1}
\end{figure}

The last construction is the switch, which we build out of two groups of lasers, moving platforms, and laser relays, as well as a memory latch. The player has the ability to change the state of the memory latch. We interpret the state of the memory latch as the state of the switch. When active, one of the relays in the latch moves a platform out of the way of one of the lasers, activating the corresponding relays and opening the set of doors to which they are connected. Another relay in the latch moves the second moving platform into the path of the second laser, deactivating its corresponding laser relays and the doors they control. Likewise, deactivating the memory latch causes both moving platforms to revert to their original positions, blocking the first laser and letting the second through. We have now successfully constructed doors and switches, so by Metatheorem~\ref{thm:switches} and Theorem~\ref{thm:pspace}, PSPACE-completeness follows.
\end{proof}

Note that in the proof of the preceding theorem, laser catchers could be used in place of laser relays, although the relays have the convenient property that they each need only be connected to a single moving platform. It is also possible that the proof could be adapted to use a single Reflection Cube instead of portals. Additional care would be required with respect to the construction of the door, and it would need to be the case that lasers from multiple directions blocked the avatar. Emancipation Grills or long falls with the moving platforms would simplify this particular door construction. 

The game elements in the following corollary are a superset of those used in Theorem~\ref{thm:cubes-pspace}, so this result follows trivially. However, we prove it by using a construction similar to that in Theorem~\ref{thm:lasers-pspace}, as we feel that the gadgets involved are interesting. We also note that the proof only uses Heavy Duty Super Buttons placed on vertical surfaces, whereas Theorem~\ref{thm:cubes-pspace} relies on their placement on the floor.

\begin{corollary} \label{thm:gravity-pspace}
\textsc{Portal} with gravity beams, cubes, Heavy Duty Super Buttons, and long fall is PSPACE-complete.
\end{corollary}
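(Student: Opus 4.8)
Because the allowed elements form a superset of those in Theorem~\ref{thm:cubes-pspace}, PSPACE-hardness is already implied, and PSPACE membership is Theorem~\ref{thm:pspace}. I would nonetheless give a direct construction in the spirit of Theorem~\ref{thm:lasers-pspace}, this time with excursion funnels (gravity beams) playing the role of lasers and cube-against-wall-button assemblies playing the role of laser relays and moving platforms, and then invoke Metatheorem~\ref{thm:switches}. So it suffices to realize (i) a \emph{door} and (ii) a \emph{switch} controlling the five or six doors the metatheorem requires, using only gravity beams, cubes, long falls, and Heavy Duty Super Buttons mounted on vertical walls.

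The central primitive is a \emph{cube-on-wall-button relay}: a cube pinned flush against a wall-mounted Super Button by a horizontal gravity beam. While the beam is on, the button stays depressed and activates whatever it is wired to (a door's bridging beam, or another beam); when the beam is switched off or reversed, the cube drops into a long fall and the button releases. A relay thus copies the on/off state of its driving beam, and since beams themselves can be toggled by activated buttons, relays chain together. A \emph{door} is then a short hallway interrupted by a long-fall drop that is bridged by a horizontal gravity beam: with the beam on the avatar is carried across, with it off the hallway is impassable, so ``door open'' is exactly ``bridging beam on,'' and that beam is wired to one relay, hence ultimately to one switch. Unlike the laser construction of Theorem~\ref{thm:lasers-pspace}, no memory latch is needed here: a cube resting on a button is already a persistent state, so the portal-induced transience that forced the latch in the laser proof simply does not arise.

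A \emph{switch} is a small room containing a single cube, two Super Buttons on opposite walls, and two horizontal gravity beams, each permanently on and each covering the half of the room adjacent to its wall, so a cube released anywhere is driven onto the nearer button; the two buttons are wired to the two door-sets of the switch. While inside, the player can pick the cube up and carry it across the midline, flipping which button is held down and hence flipping the switch. To prevent cheating, the room's unique entrance is itself a door, open only when at least one of the two buttons is pressed --- the invariant already used in Theorem~\ref{thm:cubes-pspace} --- so, there being no spare cubes anywhere in the level, the avatar can leave only after leaving the room in one of its two legal states, and in particular cannot smuggle the cube out. Wiring doors and switches together as prescribed by Metatheorem~\ref{thm:switches} (an edge of the constraint graph per switch, a consistency-check hallway of doors per AND/OR vertex, the target edge guarding the exit) completes the reduction, and Theorem~\ref{thm:pspace} gives membership.

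The main obstacle is one of faithful physics rather than reduction design: one must choose the geometry of each gadget so the continuous motion of cubes and of the avatar inside the funnels genuinely implements boolean behavior. Concretely, a cube must slide onto (and fall off) its button within a bounded time independent of level size; a door's bridging beam must carry the avatar across cleanly without letting it stop or double back; and the player must not be able to abuse the funnels to float past a closed door, knock a cube off a button from the wrong side, or pull the cube out of a switch room. As in Metatheorem~\ref{thm:timed-thm} and Theorem~\ref{thm:lasers-pspace}, these are controlled by separating time scales --- all intra-room motions take time $O(\delta)$ while hallways take $\alpha \gg n\delta$ --- together with the entrance-door/no-spare-cube invariant; making the funnel timing estimates airtight is where the real care goes.
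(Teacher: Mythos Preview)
Your plan is the paper's: note that hardness is already inherited from Theorem~\ref{thm:cubes-pspace}, then give a direct construction with excursion funnels standing in for lasers, realize a door as a funnel bridging a long-fall gap, and invoke Metatheorem~\ref{thm:switches}. The one substantive divergence is the persistence mechanism. You assert that no memory latch is needed; the paper builds one, because with only wall-mounted buttons a cube will not stay pressed under gravity, so the paper has the button power the very beam that pins the cube to it --- a self-sustaining loop the player sets or clears by placing or pulling the cube by hand. Your alternative, two permanently-on opposing funnels in the switch room so that a released cube is always driven onto the nearer of two wall buttons, is a legitimate and arguably cleaner bistable, and the entrance-door invariant you borrow from Theorem~\ref{thm:cubes-pspace} is exactly right.

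There is, however, a genuine gap in your relay. You say that when its driving beam switches off the cube ``drops into a long fall''; in this paper's terminology that means the cube is gone for good. But the NCL reduction requires flipping edges arbitrarily many times, so every door a switch controls must toggle repeatedly. With a one-shot relay, the very first flip of a switch permanently loses the relay cubes on the side being turned off, and flipping back can never reopen those doors. The fix is either to drop the relay entirely and wire each switch button directly to its at most six door beams --- which Portal allows and which makes your construction work as written --- or to have the cube land on a floor inside the funnel's capture zone so that re-enabling the beam re-seats it on the button, which is precisely the paper's memory latch, the gadget you claimed you did not need.
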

\begin{proof}
When active, a gravity beam causes objects which fit inside its diameter to be pushed or pulled in line with the gravity beam emitter. Objects in the gravity beam ignore the normal pull of gravity, and thus float along their course. We construct a simple door by placing a gravity beam so that it can carry the player avatar across a pit large enough that the avatar would otherwise be unable to traverse. We hook the gravity beam emitter up to a button allowing it to be turned on and off, unlocking and locking the door.

If we wish to only use buttons placed on vertical surfaces, we are now faced with the problem of making changes to doors persist once the avatar stops holding a cube next to the button. To solve this problem, we construct a memory latch as in Theorem~\ref{thm:lasers-pspace}. If a weighted cube button is placed in the path of a gravity beam, a weighted cube caught in the beam can depress the button as in Figure~\ref{fig:grav-memory-latch1}. A cube on the floor near a gravity beam, \iffull as in Figure~\ref{fig:grav-memory-latch0}\fi~will be picked up by the beam. \iffull Weighted cube buttons can activate and deactivate the same mechanics as laser catchers, including gravity beam emitters. Figures~\ref{fig:grav-memory-latch0} and \ref{fig:grav-memory-latch1} demonstrate a memory latch in the off and on positions, respectively.\fi We also note that gravity beams are blocked by moving platforms, just like lasers. At this point, we have the properties we need from the laser, laser catcher, and moving platform. We also note that the player can pick up and remove cubes from the beam, meaning that portals are not needed. 

\begin{figure}[t]
  \centering
    \includegraphics[width=0.72\textwidth]{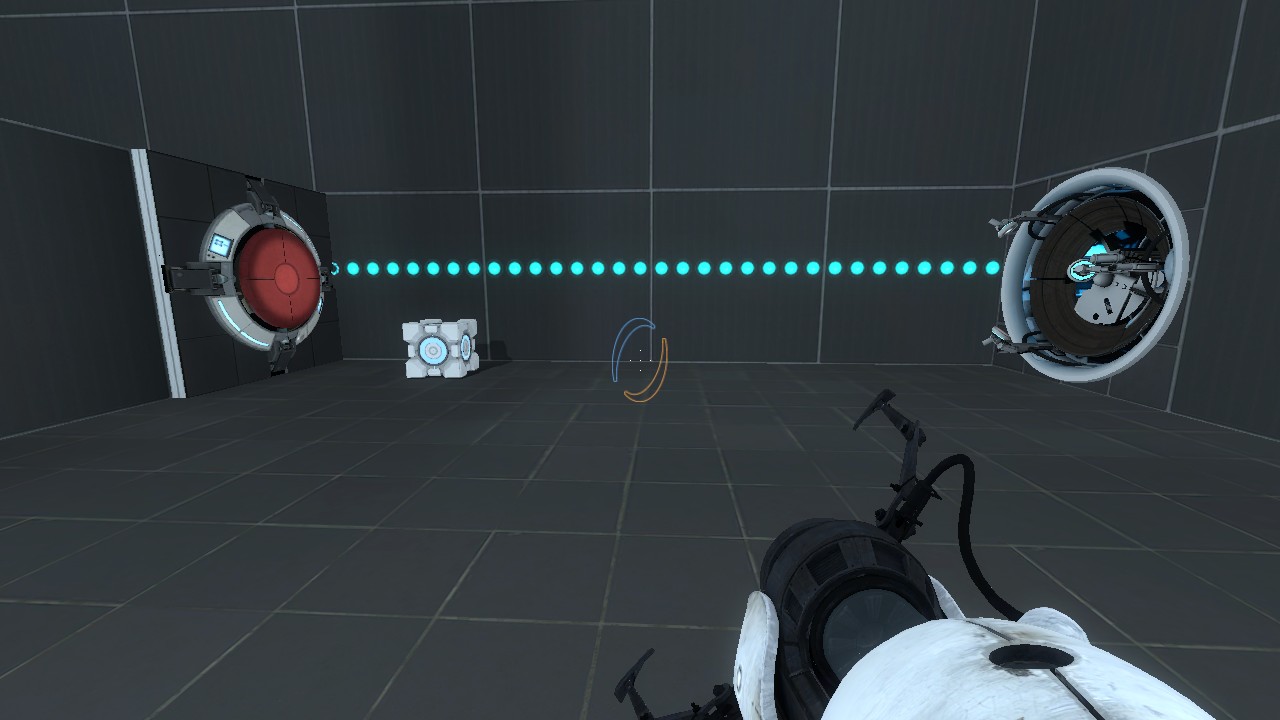}
    \caption{A memory latch in the off state.}
    \label{fig:grav-memory-latch0}
\end{figure}

\begin{figure}[t]
  \centering
    \includegraphics[width=0.72\textwidth]{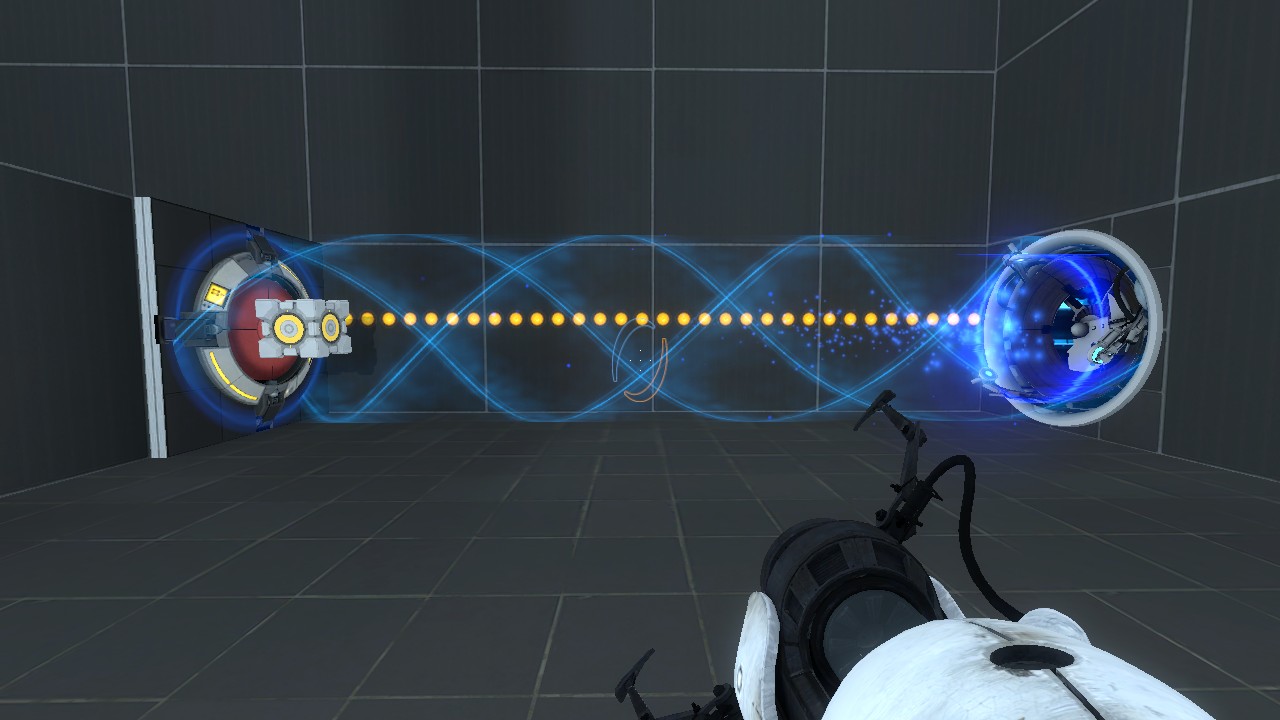}
    \caption{A memory latch in the on state.}
    \label{fig:grav-memory-latch1}
\end{figure}
\end{proof}

\section{Conclusion}
In this paper we proved a number of hardness results about the video game Portal. In Sections \ref{sec:PortalFalling} through \ref{sec:PortalHEP} we have identified several game elements that, when accounted for, give Portal sufficient flexibility so as to encode instances of NP-hard problems. Furthermore, in Section~\ref{sec:pspace} we gave a new metatheorem and use it to prove that certain additional game elements, such as lasers, relays and moving platforms, make the game PSPACE-complete. The unique game mechanics of Portal provided us with a beautiful and unique playground in which to implement the gadgets involved in the hardness proofs. Indeed, our work shows how clause, literal, and variable gadgets inspired by the work of Aloupis et al.~\cite{NintendoFun2014} can be implemented in a 3D video game. 
While our results about Portal itself will be of interest to game and puzzle enthusiasts, what we consider most interesting are the techniques we utilized to obtain them. Adding new, simple gadgets to this collection of abstractions gives us powerful new tools with which to attack future problems. In Section \iffull\ref{subsec:OtherGames}\fi\ifabstract\ref{sec:PortalTurrets}\fi~we identified several other video games that our techniques can be generalized to. We also believe the decomposition of games into individual mechanics will be an important tactic for understanding games of increasing complexity. Metatheorems~\ref{thm:timed-thm} and \ref{thm:switches} are new metatheorems for platform games. We hope that our work is useful as a stepping stone towards more metatheorems of this type. Additionally, we hope the study of motion planning in environments with dynamic topologies leads to new insights in this area.

\subsection{Open Questions}
This work leads to many open questions to pursue in future research. In Portal, we leave many hardness gaps and a number of mechanics unexplored. We are particularly curious about Portal with only portals, and Portal with only cubes. The removal of Emancipation Fields from our proofs would be very satisfying. The other major introduction in Portal 2 that we have not covered is co-op mode. If the players are free to communicate and have perfect information of the map, this feature should not add to the complexity of the game. However, the game seems designed with limited communication in mind and thus an imperfect-information model seems reasonable. Although perfect-information team games tend to reduce down to one- or two-player games, it has been shown that when the players have imperfect information the problem can become significantly harder. In particular, a cooperative game with imperfect information can be 2EXPTIME-complete~\cite{peterson2001lower}, while a team game with imperfect information can be undecidable \cite{demaine2008constraint}. We are not aware of any common or natural games that have used these techniques and think it would be very interesting to have a result such as Bridge or Team Fortress 2 being undecidable.

More than the results themselves, one would hope to use these techniques to show hardness for other problems. Many other games use movable blocks, timed door buttons, and stationary turrets and may have hardness results that immediately follow. Some techniques like encoding numbers in velocities might be transferable.  It would be good to generalize some of these into metatheorems which cover a larger variety of games.

\section*{Acknowledgments}

All raster figures are screenshots from Valve's Portal or Portal 2,
either using Portal 2's Puzzle Maker or
by way of the Portal Unofficial Wiki (\url{http://theportalwiki.com/}).

\bibliographystyle{alpha}
\bibliography{PortalBib}{}
\end{document}